\documentclass{article}

\usepackage[margin=3.5cm, top=2.5cm, bottom=3cm]{geometry}
\usepackage[T1]{fontenc}
\usepackage{graphicx}

\usepackage{xcolor}
 
\usepackage{amsthm,amsmath,amsfonts,amssymb,dsfont}
\usepackage{etoolbox}
\usepackage{tikz}
\usetikzlibrary{calc}
\usepackage{marvosym}
\usepackage[colorlinks,
linkcolor=red!70!black,
citecolor=green!50!black,
urlcolor=magenta!60!black]{hyperref}
\usepackage[nameinlink]{cleveref}
\usepackage{authblk}
\usepackage{array}



\definecolor{rougeliens}{RGB}{168,55,44}
\definecolor{couleurHugo}{RGB}{1, 99, 183}
\definecolor{vertcitations}{RGB}{0,120,39}
\definecolor{bleuurl}{RGB}{59,92,184}
\definecolor{couleurmacro}{RGB}{121, 70, 167}
\hypersetup{
    colorlinks=true,
    linkcolor=rougeliens,
    citecolor=vertcitations,
    urlcolor=bleuurl
}

\newcommand{\commentaire}[3]{{\scriptsize\textbf{\color{#1}#2: \boldmath{#3}}}}
\newcommand{\Hugo}[1]{\commentaire{couleurHugo}{Hugo}{#1}}
\renewcommand{\Hugo}[1]{}

\newcommand{\macrostyle}[1]{#1}

\newcommand{\gne}{\macrostyle{\tilde{n}}}
\newcommand{\boucle}[1]{\tilde{#1}}
\newcommand{\retour}[1]{\overline{#1}}
\newcommand{\aretes}[1]{\mathrm{edges}(#1)}
\newcommand{\petales}[1]{\mathrm{petals}(#1)}

\newcommand{\mots}[1]{W(#1,\bbF)}
\newcommand{\code}[2]{\calC[#1,#2]}
\newcommand{\coupe}[2]{\mathrm{Cut}[#1,#2]}
\newcommand{\pliage}[2]{\mathrm{Fold}[#1,#2]}
\newcommand{\superposition}[1]{\macrostyle{\mathrm{mul}}(#1)}

\newcommand{\PSL}{\mathrm{PSL}}
\newcommand{\PGL}{\mathrm{PGL}}
\newcommand{\lambbar}{\bar{\lambda}}
\newcommand{\lambgne}{\tilde{\lambda}}

\newcommand{\rounds}{\macrostyle{R}}
\newcommand{\repetitions}{\macrostyle{L}}
\newcommand{\proportioncheck}{\macrostyle{\mu}}

\newcommand{\nombrecoupes}{\macrostyle{m}}
\newcommand{\irepetitions}{\macrostyle{\ell}}
\newcommand{\realirounds}{r}
\newcommand{\irounds}{\macrostyle{\realirounds}}\newcommand{\iroundsmax}{\macrostyle{\mathrm{\realirounds}}}
\newcommand{\iaretes}{\macrostyle{j}}
\newcommand{\sousensemblearetes}{\macrostyle{J}}
\newcommand{\icoupes}{\macrostyle{i}}
\newcommand{\randomness}{\macrostyle{\rho}}
\renewcommand{\path}{\macrostyle{p}}
\newcommand{\constant}{\macrostyle{\kappa}}

\newcommand{\rejectevent}{\macrostyle{B}}
\newcommand{\feven}{\hat{f}_{\text{even}}}
\newcommand{\fodd}{\hat{f}_{\text{odd}}}

\newcommand{\QED}{\qed}


\docsvlist{Cay,diam,RS}

\renewcommand{\leq}{\leqslant}
\renewcommand{\epsilon}{\varepsilon}
\docsvlist{A,B,C,D,E,F,G,H,I,J,K,L,M,N,O,P,Q,R,S,T,U,V,W,X,Y,Z,a,b,c,d,e,f,g,h,i,j,k,l,m,n,o,p,q,r,s,t,u,v,w,x,y,z}

\definecolor{charcoal}{HTML}{2E4052}
\tikzstyle{mininoeud}=[draw=black, thick, circle, text width=2mm, align=center, inner sep=0]
\tikzstyle{noeudselectionne}=[fill=cyan]
\tikzstyle{areteA}=[draw=charcoal, thick]
\tikzstyle{areteB}=[draw=charcoal, thick]
\tikzstyle{coupe}[cyan!60!blue]=[draw=#1, thick, densely dashed]
\tikzstyle{noeudscoupe}[cyan!60!blue]=[text=#1]
\tikzstyle{bouclage}=[looseness=14]
\tikzstyle{miniboucle}=[looseness=10]
\tikzstyle{ciseaux}=[]


\newtheorem{theorem}{Theorem}
\newtheorem{proposition}{Proposition}

\newtheorem{lemma}{Lemma}

\theoremstyle{definition}
\newtheorem{definition}{Definition}

\renewcommand{\QED}{}


\title{Codes on any Cayley Graph have an Interactive Oracle Proof of Proximity}
\author[1,2]{Hugo Delavenne}
\author[1,2]{Louise Lallemand}
\affil[1]{LIX, École Polytechnique, Institut Polytechnique de Paris}
\affil[2]{INRIA}
\date{}




\begin{document}

\maketitle%

\begin{abstract}
  Interactive Oracle Proofs of Proximity (IOPP) are at the heart of code-based SNARKs, a family of zeroknowledge protocols.
  The first and most famous one is the FRI protocol~\cite{BBHR18a}, that efficiently tests proximity to Reed-Solomon codes.
  This paper generalizes the flowering IOPP introduced in~\cite{DMR25} for some specific $(2,n)$-regular Tanner codes to a much broader variety of codes: any code with symbols indexed on the edges of a Cayley graph.
  The flowering protocol of~\cite{DMR25} had a soundness parameter much lower than the FRI protocol~\cite{BCIKS23}, and complexity parameters that could compete with the FRI~\cite{BBHR18a}.
  The lower soundness and the absence of restriction on the base field may lead to other practical speedups, however the codes considered in~\cite{DMR25} have an $o(1)$ minimum distance.
  The generalization proposed in this paper preserves the soundness parameter with a slight decrease of the complexity parameters, while allowing being applied on codes with constant rate and constant minimum distance thanks to the good expansion properties of some families of Cayley graphs.
\end{abstract}

\section{Introduction}
\subsection{Code-based SNARKs}

Succinct Non-interactive ARguments of Knowledge (SNARK) allow a computationally powerful entity, called the Prover, to convince a weaker entity, called the Verifier, that it has correctly performed a long computation.
The computation is stated for instance as a program and an input.
The Prover provides the output of the execution as well as a proof that the output was actually obtained from the given program on the given input.
The goal is for that proof to be much shorter than the length of the computation, and efficiently verifiable, that is verifying the proof should be much faster than running the program again.

The main practical application of SNARKs is for blockchains.
It allows to perform huge computations out of chain and only add the SNARK proof to the chain.
SNARKs can be applied on very general computations and other applications emerge.
It can preserve authenticity signature~\cite{DCB24}, for instance when one has a camera authenticating pictures as being from the real world and wants to crop or compress a picture to JPEG while preserving the authentication. The SNARK would be a proof of the correct use of the image editing program.
It can also be used to create AI regulation~\cite{FMZLXY24} by turning the training process into a SNARK. Other properties can then be proved on the training, for instance that the training data was free of use, or was unbiased.

One of the main approaches to build SNARKs relies on efficient proximity tests to error-correcting codes.
The program is first written in an arithmetic form in the \emph{arithmetization} phase.
It can be an arithmetic circuit for the R1CS~\cite{BCRSVW19} or PlonK~\cite{GWZC19} arithmetizations, or directly as a system of polynomial constraints for the AIR~\cite{BBHR18b} arithmetization.
Verifying the computation is then reduced to testing proximity of words built from the arithmetized computation to a Reed-Solomon code.
These words satisfy the property that if the computation is correct then they belong to a Reed-Solomon code, but if the computation is not correct then at least some of them are very far from the Reed-Solomon code.
This last gap obtained for non-correct computations explains why one can rely on proximity testing rather than code membership.

An important issue about SNARKs is that they often imply intrinsic arithmetic constraints on the field used.
For instance the existence of pairing friendly curves~\cite{AEG23} or a prime characteristic $p$ with $p-1$ being very smooth~\cite{BBHR18a,ACFY24}.
This hinders applications when the field is imposed and cannot be changed by the SNARK designer.
This is the case for example when trying to compose or interoperate SNARKs, or when dealing with signatures using the ``Bitcoin curve'' \texttt{secp256k1}~\cite{SEC2-2000} that require the field to be $\bbF_p$ with $p=2^{256}-2^{32}-977$.

\subsection{Interactive Oracle Proofs of Proximity}

Given a finite field $\bbF$ and $n\in\bbN$, a (linear) error-correcting code is a linear subspace of $\bbF^n$.
Two important parameters of a code are its dimension and its minimum distance for a given distance over $\bbF^n$, often the Hamming distance $\Delta_H(u,u'):=\frac1n|\{\iaretes\in[n]\mid u_\iaretes\neq u'_\iaretes\}|$.
The minimum distance of the code is the minimum distance between two distinct codewords.

Codes with efficient (non-interactive) proximity testing are a recent breakthrough~\cite{DELLM21} but those codes are not easily suitable for arithmetization.
The idea introduced in~\cite{BCS16} is to allow the proximity test to be made interactively between the Verifier and the Prover.
There can be several rounds of interaction.
At each round, the Verifier sends randomness to the Prover, and the Prover gives the Verifier an \emph{oracle access} to a word.
That means that the Prover commits to a word $u=(u_1,...,u_n)$, and the Verifier can later ask to get a coordinate $u_\iaretes$ for some $\iaretes$ and be assured that it gets $u_\iaretes$ and not $\tilde{u}_\iaretes\neq u_\iaretes$.
Such a proof is thus called an Interactive Oracle Proof of Proximity (IOPP).

Since the protocol involves randomness from the Verifier and that it does not read words entirely, the probability of failure is nonzero.
More precisely, there are two ways for the protocol to fail.
There could be false negatives, if the word tested $u$ is indeed a codeword of $C$ but that the Verifier rejects after the execution of the protocol. This property is called the \emph{completeness} and is the probability for the Verifier to accept a valid codeword. One can usually obtain a perfect completeness, that is a completeness of $1$.
There could also be false positives, if the word tested $u$ is very far from $C$ but that the Verifier accepts after the execution of the protocol. This property is called the \emph{soundness} and is the probability for the Verifier to accept a word $u$ too far from $C$. 
One tries to built IOPP with a soundness as small as possible.
It is important to notice that completeness gives the probability for the Verifier to accept if $u\in C$ whereas the soundness gives the probability for the Verifier to accept given that $\Delta_H(u,C)\geq\delta$ with $\delta$ a fixed gap parameter.

There are several complexity aspects have to be considered for such interactive protocols.
First, we count the number of rounds of the protocol, called the \emph{round complexity}.
Then we must consider the computational complexity of the Prover and the Verifier executing honestly the protocol, called the \emph{prover complexity} and the \emph{verifier complexity}.
We count the total length of commited messages to the Verifier as oracles, called the \emph{proof length}.
We also count the number of queries to the oracles that the Verifier asks, called the \emph{query complexity}, and the amount of randomness it is using, called the \emph{randomness complexity}.

When the Verifier can have public coin randomness, the IOPP is actually turned into a non-interactive proof using a Fiat-Shamir-like heuristic~\cite{FS87,BCS16}, and it is succinct thanks to the oracles instantiated with Merkle trees~\cite{Mer79}.
These two primitives rely on cryptographic hash functions, hence the only security assumption that we assume is that the Prover cannot find collisions for that hash function, and that the output of that hash function can be considered random.
Hence we only assume the random oracle model.

\subsection{Fast Reed-Solomon IOPP}

IOPPs allow proximity testing to arithmetization-friendly codes such as Reed-Solomon codes using the Fast Reed-Solomon IOPP (FRI) protocol~\cite{BBHR18a}.
Given $n$ distinct values $x_1,...,x_n\in\bbF$, the Reed-Solomon code of length $n$ and dimension $k\leq n$, denoted $\RS[n,k]$, is the vector space of evaluation of polynomials of degree $\leq k-1$ on $x_1,...,x_n$:
\[
  \RS[n,k]:=\{(\hat{f}(x_1),...,\hat{f}(x_n))\mid \hat{f}\in\bbF[X]_{\leq k-1}\}\subseteq\bbF^n\text.
\]
We can see Reed-Solomon codes as the words $f=(f_1,...,f_n)$ whose least degree interpolating polynomial $\hat{f}$ on $x_1,...,x_n$ has degree $\leq k-1$.
The idea of the FRI protocol is to reduce testing the proximity to a word $f$ of length $n$, to testing the proximity to a word of length $n/2$.
For $k$ even, we have that a polynomial $\hat{f}$ has degree $\leq k-1$ if, and only if, its even and odd parts, denoted $\feven$ and $\fodd$, have degree $\leq (k-2)/2$.
But proving successively that $\feven$ and $\fodd$ have the right degree would not gain in complexity compared to proving that $\hat{f}$ has the right degree.
Therefore the FRI protocol reduces testing the degree of $\hat{f}$ to testing the degree of $\feven+\randomness\fodd$ using the randomness $\randomness$ given by the Verifier. This linear combination is called the \emph{folding} of $\hat{f}$.
It has two important properties.
First, the folding at value $y=x^2$ can be computed using only two values of $f$ since we have that
\begin{align}
  \feven(y)&=\frac{\hat{f}(x)+\hat{f}(-x)}{2}
  &\fodd(y)&=\frac{\hat{f}(x)-\hat{f}(-x)}{2x}\text.\label{eq:folding-FRI}
\end{align}
Second, it can be can proven~\cite[Theorem 10]{BKS18} that if $f$ is far from $\RS[n,k]$, then the folding will be far from $\RS[n/2,k/2]$ with high probability over the randomness of the Verifier.
This property is called the \emph{commit soundness}.
Hence applying successively the folding operator on a word far from the original Reed-Solomon code will preserve a gap of distance to the successive codes.
However the issue of that folding is that it requires the field used to have $2^m$ roots of the unity with $2^m\approx n$ in order to be able to apply \cref{eq:folding-FRI} on successive domain of evaluation being the squares of the previous ones.

The FRI protocol on the input $f_0$ runs as follows.
During the commit phase, the Verifier sends randomness to the Prover which then gives oracle access to a word $f_1$, and they repeat for $\rounds$ rounds, the Prover giving oracle to $f_\irounds$ at each round $\irounds\in[\rounds]$.
Since the words commited by the Prover have a priori no link between each other, the Verifier checks during the query phase that each $f_\irounds$ is indeed built as the folding of $f_{\irounds-1}$ by testing on some random values. Finally, the Verifier checks that $f_\rounds$ has the right degree.
The FRI protocol has perfect completeness.
The soundness parameter has a component giving the probability that the commit phase goes wrong, the commit soundness, computed here in \Cref{proposition:commit-soundness}, and a component giving the probability that the query phase goes wrong, called the query soundness, computed here in \Cref{proposition:query-soundness}.

The STIR IOPP~\cite{ACFY24} allows the Verifier to query less coordinates. It works by folding and expanding the resulting domain, leading to a much faster decrease in the rate of the codes.
However, both FRI and STIR protocol require the field $\bbF$ to be very smooth: the FRI requires $|\bbF|-1$ to have a lot of small factors and the STIR requires $|\bbF|-1$ to be divisible by a large power of $2$.

\subsection{Codes on graphs}

We work here on the flowering protocol~\cite{DMR25}, which it an IOPP for some codes built on graphs, that mimics the FRI protocol by defining a different folding operator.
We consider codes on graphs similar to the ones used by Sipser and Spielman~\cite{SS96} where the graph is $n$-regular and the symbols of the words are values on the edges between the vertices.
Given a base code that will be here a Reed-Solomon code $\RS[n,k]$, the code is defined as the set of words such that the ``local view'' around each vertex, composed of the vector of values of edges around that vertex, is a Reed-Solomon codeword.
These codes can also be seen as Tanner codes.
A Tanner graph is a bipartite graph with the symbols on one side and the constraints on the other side, and the edges link symbols to constraints to be satisfied.
With our graphs, the symbols (the values on the edges) would be linked to two constraints (the vertices that require their local view to be a Reed-Solomon codeword).

The graphs we consider are Cayley graphs~\cite{Cay78}.
Given a group $G$ and a symmetric generating set $S\subseteq G$, the Cayley graph $\Cay(G,S)$ has vertices the elements of $G$ and there is an edge between $g$ and $g'\in G$ if there exists $s\in S$ such that $g'=g\cdot s$.
We are interested in Cayley graphs for the symmetries they provide and that helps to create a folding operator, but also for the good \emph{expansion} of some of these graphs~\cite{Mur20,Lub12}.
These expansion properties imply both a lower bound on the minimum distance of graph codes built on them \cite{K19}, and an upper bound on their diameter~\cite[Lemma 2.1]{DELLM21}.

\subsection{Flowering protocol}

An IOPP, called \emph{Flowering}, for codes on graphs was proposed in~\cite{DMR25}.
The authors create a folding operator to mimic the FRI protocol.
They do not split the codeword in two halves, instead they take a coarser view and split the set of vertices of the graph in two halves, and create two words with the edges of the two resulting subgraphs, that will play the role of the even and odd parts of the FRI.
This operation is called \emph{cutting} the graph.
When the two cut graphs are isomorphic, it is possible to define a folding operator by simply doing a linear combination of the codewords using that isomorphism.
Then the protocol is the same as the FRI protocol.

Since the number of vertices is divided by two at each step, the graph must have a number of edges that is a power of two.
To achieve this and having a lot of symmetries to create isomorphic cuts, the authors apply their protocol to Cayley graphs of the additive group over $(\bbZ/2\bbZ)^r$.
Using this, they achieve complexity parameters close to the FRI protocol, and soundness parameters that surpass the FRI protocol.
However the codes built on these graphs have a relative minimal distance that goes to $0$ as the length of the code increases.

\subsection{Results}

We generalize the Flowering protocol from~\cite{DMR25} by allowing to separate the vertices in more than two sets, and most importantly to sets of vertices that may overlap and thus not form a partition of the vertices.
While this reduces less the size of the folding compared to exactly dividing the size, and yields higher factors for the Prover complexity, it allows us to construct a folding operator for Cayley graphs on any group and any symmetric generating set.
Using the expansion properties of Cayley graphs, we can then achieve an IOPP for codes on graphs with a constant rate and constant relative minimum distance.

We manage to keep the same soundness parameters as in~\cite[Theorem 2]{DMR25} by giving different weights to the vertices depending on the number of cuts they are part of.
Therefore we achieve an IOPP with constant rate and minimum distance, and lower soundness than the FRI and STIR protocols, since our protocol require a smaller field and can be applied up to the covering radius whereas the FRI and the STIR protocols are only proven up to the Johnson radius~\cite[Theorem 7.2]{BCIKS23}, while being conjectured up to the covering radius~\cite[Conjecture 2.3]{BGKS20}.

We compare in \Cref{tab:comparison-complexity,tab:comparison-soundness} the complexity and soundness parameters of the FRI protocol~\cite{BBHR18a,BCIKS23} and the STIR protocol~\cite{ACFY24} with the original flowering protocol~\cite{DMR25} and our generalization, stated more precisely in \Cref{theorem:complexities,theorem:soundness}.
All protocols are applied on a code of length $N$ and dimension $K$.
The codes used for the flowering protocol are those built in \Cref{sec:application}, where $n$ and $\constant$ are constants.
Note that the $\log N$ factors of the prover complexity and proof length are due to very rough estimations of the sizes of intermediate codes due to the overlap of the set of vertices, since they are considered all constant for the estimation.

  \begin{table*}[th!]
    \centering
      \begin{tabular}{m{15mm}|m{28mm}|c|c|c}
        \centering{}Protocol & \centering{}FRI \cite{BBHR18a,BCIKS23} & STIR~\cite{ACFY24} & Flowering~\cite{DMR25} & Flowering (here) \\[2pt]\hline&&&&\\[-8pt]
        \centering{}Prover & \centering$<8N$ & $O(N)$ & $<3N$ & $<5\constant N\log N$ \\[2pt]&&&&\\[-8pt]
        \centering{}Verifier & \centering$<\frac{2\lambda\log K}{\min(\delta,1-\sqrt{K/N})}$ & $O(\lambda^2+\lambda\log\log K)$ & $<\frac{4\lambda\log^2N}{\delta+\log N/N}$ & $<\frac{8\lambda n\constant\log N}{\delta+\log N/N}$ \\[2pt]&&&&\\[-8pt]
        \centering{}Query & \centering$\frac{2\lambda\log K}{\min(\delta,1-\sqrt{K/N})}$ & $O(\lambda\log\log K)$ & $<\frac{2\lambda\log^2N}{\delta+\log N/N}$& $<\frac{3\lambda n\constant\log N}{\delta+\log N/N}$ \\[2pt]&&&&\\[-8pt]
        \centering{}Rounds & \centering$\log K$ & $O(\log K)$ & $<\log N$ & $<\constant\log N$\\[2pt]&&&&\\[-8pt]
        \centering{}Length & \centering$<N$ & $N + O(\log K)$ & $<N$ & $<\constant N\log N$ \\[2pt]\hline&&&&\\[-8pt]
      \centering{}Field size & \centering$> \frac{2^\lambda 10^7N^{3.5}\log K}{K^{1.5}}$ & $\Omega\left(\frac{\lambda 2^\lambda K^2 N^{3.5}}{\log (N/K)}\right)$ & $>2^\lambda N\log N$&$>2^{\lambda+1}\constant N\log N$\\[2pt]&&\\[-8pt]
      \centering{}Field structure & \centering$|\bbF|-1$ has lots of small factors & $\left.2^{\lceil\log K\rceil}\,\right|\, |\bbF|-1$ & \multicolumn{2}{c}{Any}
    \end{tabular}
  \caption{Comparison of the complexity parameters to reach $\lambda$ bits of security when testing $\delta$-proximity to a code of length $N$ and dimension $K$.
    The constants for the STIR depend on $\delta$. Here, $\constant$ and $n$ are constant, $n:=p+1$ and $\constant:=\constant(p)$ with the notations of \Cref{sec:case-study}.}
  \label{tab:comparison-complexity}
\end{table*}

\begin{table*}[th!]
  \centering
    \begin{tabular}{m{15mm}|c|c|c}
      \centering{}Protocol & FRI \cite{BBHR18a,BCIKS23} & Flowering~\cite{DMR25} & Flowering (here) \\[2pt]\hline&&&\\[-8pt]
      \centering{}Commit soundness & $\displaystyle\frac{10^7N^{3.5}\log K}{K^{1.5}|\bbF|}$ & $\displaystyle\frac{N\log N}{|\bbF|}$ & $\displaystyle\frac{2\constant N\log N}{|\bbF|}$\\[2pt]&\\[-8pt]
      \centering{}Query soundness & $\displaystyle\left(1-\min\left(\delta,1-1.05\sqrt{\frac{K}{N}}\right)\right)^\repetitions$ & \multicolumn{2}{c}{$\displaystyle\left(1-\delta-\frac{\log N}{N}\right)^\repetitions$}
    \end{tabular}
  \caption{Comparison of the soundness parameter when testing $\delta$-proximity to a code of length $N$ and dimension $K$, where $\repetitions\in\bbN$ is the repetition parameter, and $\constant$ and $n$ are constant, $n:=p+1$ and $\constant:=\constant(p)$ with the notations of \Cref{sec:case-study}. The total soundness is the sum of the commit and query soundness.}
  \label{tab:comparison-soundness}
\end{table*}

The typical range of length and dimension when using the FRI or the STIR, as stated in the benchmark of~\cite{ACFY24} is $K\in[2^{18},2^{30}]$ and $N\in[2K,16K]$.
For $N=2^{19}$, $K=2^{18}$ and $\lambda=128$ for instance, the STIR requires a field of size $>2^{237}$ and the FRI a field of size $>2^{194}$, while our protocol requires a generic field of size $>2^{160}$ with $\constant\approx 2^7$.
This means that our protocol could benefit in term of Prover and Verifier complexity from working on smaller fields.



\section{Graphs and codes on graphs}

In the paper, rounds of protocols will be indexed using the letter $\irounds\in[\rounds]$, edges outgoing from a vertex will be indexed using the letter $\iaretes\in[n]$, cuts will be indexed using the letter $\icoupes\in[\nombrecoupes]$, and steps of repetitions of some phases using the letter $\irepetitions\in[\repetitions]$.

\subsection{Regular Indexed Multigraphs}

Fix a finite field $\bbF$ and $n\leq|\bbF|$.
We consider $n$-regular indexed multigraphs ($n$-RIM), which are graphs with multiple edges between vertices and loops around vertices, and whose edges around a vertex are indexed between $1$ and $n$.
We slightly extend the notion from~\cite{DMR25} by allowing an edge between $v$ and $v'$ to be indexed differently in $v$ and in $v'$.

\begin{definition}[Regular Indexed Multigraph (RIM) {\cite[Definition 1]{DMR25}}]
  A \emph{$n$-RIM} $\Gamma=(V,E)$ is given by a set of vertices $V$ and a function $E:V\times[n]\to V$ such that for $(v,\iaretes)\in V\times[n]$, $E(v,\iaretes)$ is the neighbor of $v$ through the edge indexed $\iaretes$, and such that $E$ satisfies the \emph{well-definedness property}: there is a permutation of $[n]$ denoted $\iaretes\mapsto\retour{\iaretes}$ such that\Hugo{il faudrait qu'on définisse ça dans nos graphes partie 4}
  \begin{equation*}
    \label{eq:well-definedness}
    \forall (v,\iaretes)\in V\times[n], \hskip 3mm E\!\left(E(v,\iaretes),\retour{\iaretes}\right)=v\text.
  \end{equation*}
\end{definition}

\noindent
We will consider that the permutation $\iaretes\mapsto\retour{\iaretes}$ is always fixed and clear from context.
Given an $n$-RIM $\Gamma=(V,E)$, its set of edges, denoted $\aretes{\Gamma}$, is defined as the quotient of $V\times[n]$ by the relation $\sim_E$ defined by $(v,\iaretes)\sim_E(v',\iaretes')$ iff either $(v,\iaretes)=(v',\iaretes')$, or $\iaretes=\retour{\iaretes'}$ and $E(v,\iaretes)=v'$:
\begin{equation*}
  \label{eq:edges}
  \aretes{\Gamma}=(V\times[n])/\sim_E\text.
\end{equation*}

\noindent
Note that the equivalence classes of $\sim_E$ have either one element for loop edges, or two elements for edges between distinct vertices.
We denote by $\petales{\Gamma}\subseteq\aretes{\Gamma}$ the set of edges $(v,\iaretes)$ that are loops, and that we call \emph{petals}.

We study words on graphs whose symbols are indexed on the edges of the graphs. We see these words as functions $\aretes{\Gamma}\to\bbF$ and we denote by $\mots{\Gamma}$ the set of words on a graph $\Gamma$ in $\bbF$.
For $f\in\mots{\Gamma}$ and $v\in V$, denote by $f(v,\cdot)=(f(v,1),...,f(v,n))$ the vector of the values indexed by the edges around $v$.
For clarity, we will actually use the words $f\in\mots{\Gamma}$ as functions $V\times[n]\to\bbF$, without proving explicitely that whenever $(v,\iaretes)\sim_E(v',\iaretes')$, then $f(v,\iaretes)=f(v',\iaretes')$.

Tanner codes~\cite{Tan81} are codes built on $(l,r)$-regular bipartite graphs where the symbols are vertices of arity $l$ on the left side and the constraints are vertices of arity $r$ on the right side.
The constraints are applied to the symbols connected to them.
Since the symbols of our words are on the edges, hence connected to two constraints, we can see our codes as Tanner codes on $(2,n)$-regular bipartite graphs by adding vertices on the edges to put symbols on them.
In the case studied in this paper, the constraints will be that the symbols connected to a vertex have to form a Reed-Solomon codeword.

\begin{definition}[Code on graph]
  Given an $n$-RIM $\Gamma=(V,E)$ and $k\leq n$, we define the code $\code{\Gamma}{k}\subseteq\mots{\Gamma}$ by
  \[
    \code{\Gamma}{k}:=\left\{f\in\mots{\Gamma}\mid \forall v\in V, f(v,\cdot)\in \RS[n,k]\right\}\text.
  \]
\end{definition}

\noindent
We give a general lower bound on the dimension of these codes.

\begin{proposition}[Lower bound on the dimension {\cite[Proposition 2]{DMR25}}]
  \label{proposition:bound-on-the-dimension}
  Let $\Gamma=(V,E)$ be an $n$-RIM and $k\leq n$.
  Then
  \begin{equation*}
    \dim\code{\Gamma}{k}\geq(k-n/2)|V|+|\petales{\Gamma}|/2\text.
  \end{equation*}
\end{proposition}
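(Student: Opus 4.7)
The plan is to view $\code{\Gamma}{k}$ as the kernel of a linear map whose image lies in a space whose dimension I can control, and then apply the rank–nullity theorem.

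First I would count the edges: since $\aretes{\Gamma}=(V\times[n])/\sim_E$ with equivalence classes of size $1$ exactly on petals and of size $2$ otherwise, the $|V|\cdot n$ pairs $(v,\iaretes)$ split as
\[
  |V|\cdot n \;=\; |\petales{\Gamma}| \;+\; 2\bigl(|\aretes{\Gamma}|-|\petales{\Gamma}|\bigr),
\]
so $|\aretes{\Gamma}|=\tfrac{1}{2}\bigl(n|V|+|\petales{\Gamma}|\bigr)$. Thus $\mots{\Gamma}\cong\bbF^{\aretes{\Gamma}}$ has dimension $\tfrac{1}{2}(n|V|+|\petales{\Gamma}|)$.

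Next I would realize $\code{\Gamma}{k}$ as cut out by local constraints. For each vertex $v\in V$, the requirement $f(v,\cdot)\in\RS[n,k]$ is a linear condition on $\bbF^n$ of codimension at most $n-k$ (the Reed–Solomon code of length $n$ and dimension $k$ has codimension $n-k$). Stacking these constraints over all $v$ defines a linear map $\Phi:\mots{\Gamma}\to\bigoplus_{v\in V}\bbF^n/\RS[n,k]$ whose kernel is precisely $\code{\Gamma}{k}$. Since $\dim\mathrm{Im}(\Phi)\leq|V|(n-k)$, the rank–nullity theorem gives
\[
  \dim\code{\Gamma}{k}\;\geq\;|\aretes{\Gamma}|-|V|(n-k)\;=\;\tfrac{1}{2}\bigl(n|V|+|\petales{\Gamma}|\bigr)-|V|(n-k),
\]
which simplifies to $(k-n/2)|V|+|\petales{\Gamma}|/2$, as required.

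There is essentially no hard part: the only thing to be careful about is the bookkeeping in the edge count, namely treating loops as classes of size $1$ (contributing a single coordinate to $f$ but still using an index in $[n]$ at their vertex), which is exactly where the $|\petales{\Gamma}|/2$ correction term comes from. Everything else is a one-line application of rank–nullity.
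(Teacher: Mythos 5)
Your proof is correct: the half-edge count $|\aretes{\Gamma}|=\tfrac12\bigl(n|V|+|\petales{\Gamma}|\bigr)$ combined with rank--nullity applied to the map $f\mapsto\bigl(f(v,\cdot)\bmod\RS[n,k]\bigr)_{v\in V}$, whose kernel is exactly $\code{\Gamma}{k}$ and whose image has dimension at most $|V|(n-k)$, gives precisely the stated bound. This is the same standard counting argument as in the cited source [DMR25, Proposition 2]; the present paper only quotes the result without reproducing the proof, so there is nothing further to reconcile.
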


\noindent
We use the \emph{relative Hamming distance}, denoted $\Delta_H$ between words of $\mots{\Gamma}$, and the relative vertex distance from \cite{DMR25}.

\begin{definition}[Vertex distance {\cite[Definition 8]{DMR25}}]
  \label{definition:vertex-distance}
  Let $\Gamma=(V,E)$ be an $n$-RIM.
  Let $f,f'\in\mots{\Gamma}$.
  Define the relative \emph{vertex distance} between $f$ and $f'$, denoted $\Delta_V$, as
  \begin{equation*}
    \label{eq:vertex-distance}
    \Delta_V(f,f'):=\frac{1}{|V|}|\{v\in V\mid f(v,\cdot)\neq f'(v,\cdot)\}|\text.
  \end{equation*}
\end{definition}

\noindent
The vertex distance can be thought as a coarser version of the Hamming distance.

\begin{proposition}[Comparison between distances {\cite[Proposition 1]{DMR25}}]
  \label{proposition:inequality-distances}
  Let $\Gamma$ be an $n$-RIM.
  Let $f,f'\in\mots{\Gamma}$.
  Then there exists $\mu(\Gamma)\in[0,1]$ such that $\Delta_V(f,f')\geq\mu(\Gamma)\Delta_H(f,f')$.
  Furthermore, if all vertices of $\Gamma$ have the same number of petals then $\mu(\Gamma)=1$.
\end{proposition}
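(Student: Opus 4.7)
The plan is to relate the bad-vertex set $D_V := \{v \in V : f(v,\cdot) \neq f'(v,\cdot)\}$ to the bad-edge set $D_E := \{e \in \aretes{\Gamma} : f(e) \neq f'(e)\}$ using the simple observation that any bad slot $(v,j)$ --- one with $f(v,j) \neq f'(v,j)$ --- forces $v$ into $D_V$ (and by well-definedness, so does the slot $(E(v,j),\bar{j})$ representing the same edge). Hence every endpoint of a bad edge lies in $D_V$.

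For the general inequality, I would first note that each vertex is incident to at most $n$ distinct edges, one per slot $(v,j)$ with $j\in[n]$, so $|D_E| \leq n|D_V|$. Plugging in the definitions of $\Delta_V$ and $\Delta_H$ gives
\[
  \Delta_V(f,f') \;\geq\; \frac{|D_E|}{n|V|} \;=\; \frac{|\aretes{\Gamma}|}{n|V|}\,\Delta_H(f,f'),
\]
so setting $\mu(\Gamma) := |\aretes{\Gamma}|/(n|V|)$ does the job. Since each equivalence class of $\sim_E$ has at most two elements, $|\aretes{\Gamma}|\leq n|V|$, whence $\mu(\Gamma)\in[0,1]$.

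For the refinement, I would sharpen the accounting by working at the level of slots: let $D_{\mathrm{slot}} := \{(v,j) \in V \times [n] : f(v,j) \neq f'(v,j)\}$, so $D_{\mathrm{slot}} \subseteq D_V \times [n]$ and $|D_{\mathrm{slot}}| \leq n|D_V|$. When every vertex has the same number of petals, a standard double count should give $|\aretes{\Gamma}| = n|V|/2$ and every equivalence class of $\sim_E$ of size exactly two, so $|D_{\mathrm{slot}}| = 2|D_E|$. Then
\[
  \Delta_H(f,f') \;=\; \frac{|D_E|}{|\aretes{\Gamma}|} \;=\; \frac{|D_{\mathrm{slot}}|}{n|V|} \;\leq\; \frac{|D_V|}{|V|} \;=\; \Delta_V(f,f'),
\]
so $\mu(\Gamma)=1$.

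The delicate point will be that refinement: I must argue that the uniform-petal hypothesis rules out ``singleton'' equivalence classes, i.e.\ self-dual loops with $j=\bar{j}$ and $E(v,j)=v$, since those would only contribute one slot per bad edge and inflate $|\aretes{\Gamma}|$ beyond $n|V|/2$, weakening the slot-to-edge conversion. This relies on a careful analysis of how the involution $j\mapsto\bar{j}$ interacts with the loop slots at each vertex, and is where the uniformity hypothesis truly bites.
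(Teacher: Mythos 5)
Your general inequality is fine: every slot of a disagreeing edge certifies its endpoint as a disagreeing vertex, each vertex is incident to at most $n$ edges, so $|D_E|\leq n|D_V|$ and $\mu(\Gamma):=|\aretes{\Gamma}|/(n|V|)\in[0,1]$ indeed satisfies $\Delta_V\geq\mu(\Gamma)\Delta_H$.

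The refinement, however, rests on a wrong idea. You plan to show that the hypothesis ``all vertices have the same number of petals'' rules out singleton equivalence classes and forces $|\aretes{\Gamma}|=n|V|/2$; it does not, and no analysis of the involution $\iaretes\mapsto\retour{\iaretes}$ will make it do so. That hypothesis is precisely designed to cover graphs that \emph{do} have petals: every cut-graph $\Gamma_\irounds$ produced in the protocol has them, and the final flower consists of nothing but petals, yet all of these must satisfy $\mu=1$ for the soundness statement to be useful. If every vertex carries $t\geq1$ petals then, with the paper's convention that a petal is a one-element class, $|\aretes{\Gamma}|=\tfrac{(n+t)|V|}{2}>\tfrac{n|V|}{2}$, so the identity you rely on fails and your slot count only gives the constant $(n+t)/(2n)<1$.

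The uniformity hypothesis has to be used in the opposite way: keep the petals and count them separately. A disagreeing petal has its unique endpoint in $D_V$, while a disagreeing non-petal edge has its two distinct endpoints in $D_V$. Each vertex carries exactly $t$ petals and $n-t$ non-petal slots, hence at most $t$ disagreeing petals and at most $n-t$ disagreeing non-petal edges are incident to it, the latter being counted from two bad vertices each; therefore $|D_E|\leq t|D_V|+\tfrac{n-t}{2}|D_V|=\tfrac{n+t}{2}|D_V|$. On the other hand $|\aretes{\Gamma}|=\tfrac{n+t}{2}|V|$ \emph{exactly}, because the number of petals per vertex is the same everywhere. Dividing yields $\Delta_H(f,f')=\frac{|D_E|}{|\aretes{\Gamma}|}\leq\frac{|D_V|}{|V|}=\Delta_V(f,f')$, i.e.\ $\mu(\Gamma)=1$; your computation is the special case $t=0$. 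So the missing ingredient is not excluding petals but balancing the per-vertex edge capacity $(n+t)/2$ against the exact total edge count, which is exactly where the equal-number-of-petals assumption bites.
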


\noindent
In particular, if $\Gamma$ has no petals then $\mu(\Gamma)=1$.
Moreover, the generalization in this paper requires weighted vertex distances.

\begin{definition}[Weighted vertex distance]
  Let $w:V\to\bbR^+$ be a weight function and $T\subseteq V$.
  Denote
  \[
    |T|_w:=\sum_{v\in T}w(v)\text.
  \]
  If $f,f'\in\mots{\Gamma}$, denote $D_w(f,f'):=|\{v\in V\mid f(v,\cdot)\neq f'(v,\cdot)\}|_w$, and define the \emph{relative $w$-weighted vertex distance} by
  \[
    \Delta_w(f,f'):=\frac{D_w(f,f')}{|V|_w}\in[0,1]\text.
  \]
\end{definition}

\subsection{Cutting graphs}

In order to mimic the FRI protocol \cite{BBHR18a}, we need to find a way to split the words of $\mots{\Gamma}$ into several pieces.
To achieve this, we split the graph into several pieces according to some subsets of vertices, and cut the edges leaving these subsets.

\begin{definition}[Cut-graph, cut-word {\cite[Definition 4]{DMR25}}]
  If $\Gamma=(V,E)$ is an $n$-RIM, for $V'\subseteq V$, define the \emph{cut-graph} $\coupe{\Gamma}{V'}$ as the $n$-RIM $(V',E')$ where
  \begin{equation*}
    \label{eq:cut-graph}
    E':(v,\iaretes)\mapsto
    \begin{cases}
      E(v,\iaretes)&\text{if }E(v,\iaretes)\in V'\\
      v&\text{otherwise.}
    \end{cases}
  \end{equation*}
  If $f\in\mots{\Gamma}$ and $V'\subseteq V$, define the \emph{cut-word} $\coupe{f}{V'}\in\mots{\coupe{\Gamma}{V'}}$ as the restriction of $f$ to $\coupe{\Gamma}{V'}$: for $(v,\iaretes)\in V'\times[n], \coupe{f}{V'}(v,\iaretes)=f(v,\iaretes)$.
\end{definition}

\noindent
\Cref{fig:cut-graph} illustrates a graph cutting.
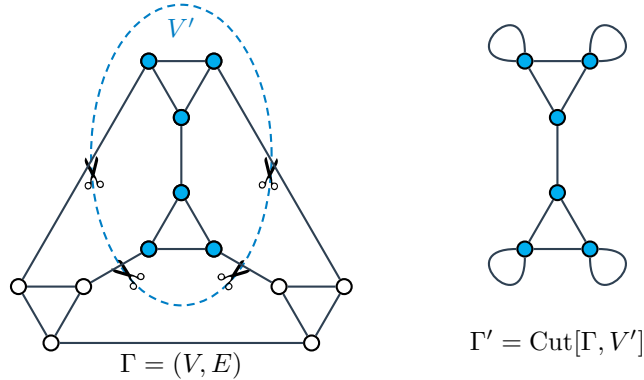
\begin{figure}[h!]
  \centering
  \begin{tikzpicture}
    \draw [coupe] (0,1) circle (1.2 and 2);
    \node [noeudscoupe] at (0,2.7) {$V'$};

    \foreach \a in {-7, -53} {
      \node [ciseaux, sloped, rotate=90+\a] at ($(0,1) + (\a:1.2 and 2)$) {\LARGE\LeftScissors};
      \node [ciseaux, sloped, rotate=180+90-\a] at ($(0,1) + (180-\a:1.2 and 2)$) {\LARGE\RightScissors};
    }

    \foreach \i in {0,1,2} {
      \node [mininoeud] (\i) at (\i/3*360+90:0.5) {};
      \foreach \j in {0,1,2} {
        \node [mininoeud] (\j\i) at ($(\i/3*360+90:2) + (\j/3*360+\i/3*360-90:0.5)$) {};
      }
    }
    \draw [areteB] (0) to (1) to (2) to (0);
    \foreach \i/\j in {0/2,1/0,2/1} {
      \draw [areteA] (0\i) to (\i);
      \draw [areteA] (1\i) to (2\j);
      \draw [areteB] (0\i) to (1\i) to (2\i) to (0\i);
    }
    \node [below] at (0,-1.5) {$\Gamma=(V,E)$};
    \foreach \i in {0,1,2} {
      \node [mininoeud, noeudselectionne] at (\i/3*360+90:0.5) {};
      \node [mininoeud, noeudselectionne] at ($(90:2) + (\i/3*360-90:0.5)$) {};
    }


    \begin{scope}[xshift=5cm]
      \foreach \i in {0,1,2} {
        \node [mininoeud, noeudselectionne] (A\i) at (\i/3*360+90:0.5) {};
        \node [mininoeud, noeudselectionne] (B\i) at ($(90:2) + (\i/3*360-90:0.5)$) {};
      }
      \draw [areteB] (A0) to (A1) to (A2) to (A0);
      \draw [areteB] (B0) to (B1) to (B2) to (B0);
      \draw [areteA] (A0) to (B0);
      \foreach \n [count=\i] in {B1,B2,A1,A2} {
        \draw [areteA, bouclage, in=90*\i-45+45, out=90*\i-45-45] (\n) to (\n);
      }
      \node [below] at (0,-1.2) {$\Gamma'=\coupe{\Gamma}{V'}$};
    \end{scope}
  \end{tikzpicture}
  \caption{Example of graph cutting. The cyan edges $V'\subseteq V$ are the ones kept for the cut. The edges leaving the subgraph are cut and become petals.}
  \label{fig:cut-graph}
\end{figure}
Once we have several cuts $V_0,...,V_{m-1}\subseteq V$, we also need a way to \emph{fold} their cut-words together as a single word, like for the FRI protocol.
We require to apply such a fold that the cuts are all \emph{isomorphic}.

\begin{definition}[RIM isomorphism {\cite[Definition 3]{DMR25}}]
  Let $\Gamma=(V,E)$ and $\Gamma'=(V',E')$ be $n$-RIM. An \emph{isomorphism} from $\Gamma$ to $\Gamma'$ is a bijection $\varphi:V\to V'$ such that for any $(v,\iaretes)\in V\times[n]$ we have $\varphi(E(v,\iaretes))=E'(\varphi(v),\iaretes)$.
\end{definition}

\begin{definition}[Flowering cut collection {\cite[Definition 5]{DMR25}}]
  For $\nombrecoupes\geq 2$, let $V_0,...,V_{\nombrecoupes-1}\subseteq V$ such that $\bigcup_{\icoupes=0}^{\nombrecoupes-1} V_\icoupes=V$.
  If for all $\icoupes\in\{0,...,\nombrecoupes-1\}$, $\coupe{\Gamma}{V_\icoupes}$ is isomorphic to $\coupe{\Gamma}{V_0}$, then we say that $(V_0,...,V_{\nombrecoupes-1})$ is a \emph{flowering cut collection} of order $\nombrecoupes$ from $\Gamma$ to $\coupe{\Gamma}{V_0}$.

  For $v\in V$, denote $\superposition{v}:=|\{\icoupes\in\{0,...,\nombrecoupes-1\}\mid v\in V_\icoupes\}|\geq 1$ the number of cuts that overlap on $v$, that we call \emph{multiplicity}.
  For the symmetries of the cuts, we furthermore require that a cut collection $(V_0,...,V_{\nombrecoupes-1})$ satisfies that the multiplicity is invariant under the isomorphisms, i.e. for any $v\in V$ and $\icoupes\in\{0,...,\nombrecoupes-1\}$,
  \begin{equation}
    \label{eq:symmetry-cuts}
    \superposition{v}=\superposition{\varphi_\icoupes(v)}\text.
  \end{equation}
\end{definition}

\noindent
\Cref{fig:cut-collection} illustrates a flowering cut collection.
The difference with \cite{DMR25} lies here: the flowering cuts collections were restricted to $\nombrecoupes=2$, and $V_0$ and $V_1$ being disjoint. 
\begin{figure}[h!]
  \centering
  \begin{tikzpicture}
    \foreach \i/\c in {0/cyan,1/magenta,2/yellow} {
      \node [mininoeud, fill=black] (\i) at (\i/3*360+90:0.5) {};
      \foreach \j in {0,1,2} {
        \node [mininoeud, fill=\c] (\j\i) at ($(\i/3*360+90:2) + (\j/3*360+\i/3*360-90:0.5)$) {};
      }
    }
    \draw [areteB] (0) to (1) to (2) to (0);
    \foreach \i/\j in {0/2,1/0,2/1} {
      \draw [areteA] (0\i) to (\i);
      \draw [areteA] (1\i) to (2\j);
      \draw [areteB] (0\i) to (1\i) to (2\i) to (0\i);
    }
    \node [below] at (0,-1.8) {$\Gamma=(V,E)$};
      \foreach \r/\c in {0/cyan!60!blue,1/magenta,2/red!40!yellow} {
      \begin{scope}[rotate=\r/3*360]
        \draw [coupe=\c] (0,1) circle (1.2 and 2);
        \node [noeudscoupe=\c] at (0,2.6) {$V_\r$};
      \end{scope}
    }

    \begin{scope}[xshift=6cm]
      \foreach \r/\c in {0/cyan,1/magenta,2/yellow} {
        \begin{scope}[rotate=\r/3*360, xscale=0.5, yscale=0.5]
          \coordinate (centre) at (90:2);
          \foreach \i in {0,1,2} {
            \node [mininoeud, fill=black] (A\i) at ($(centre) + (\i/3*360+90:0.5)$) {};
            \node [mininoeud, fill=\c] (B\i) at ($(centre) + (90:2) + (\i/3*360-90:0.5)$) {};
          }
          \draw [areteB] (A0) to (A1) to (A2) to (A0);
          \draw [areteB] (B0) to (B1) to (B2) to (B0);
          \draw [areteA] (A0) to (B0);
          \foreach \n [count=\i] in {B1,B2,A1,A2} {
            \draw [areteA, miniboucle, in=90*\i-45+45, out=90*\i-45-45] (\n) to (\n);
          }
        \end{scope}
      }
      \node at (0,2.8) {$\coupe{\Gamma}{V_0}$};
      \node at (-1.3,-2) {$\coupe{\Gamma}{V_1}$};
      \node at (1.3,-2) {$\coupe{\Gamma}{V_2}$};
    \end{scope}
  \end{tikzpicture}
  \caption{
    Example of flowering cut collection.
    All cut-graphs are isomorphic.
    The black vertices have a multiplicity of 3 since they are contained in all cuts, while the other cyan, magenta and yellow vertices only have a multiplicity of 1.
  }
  \label{fig:cut-collection}
\end{figure}
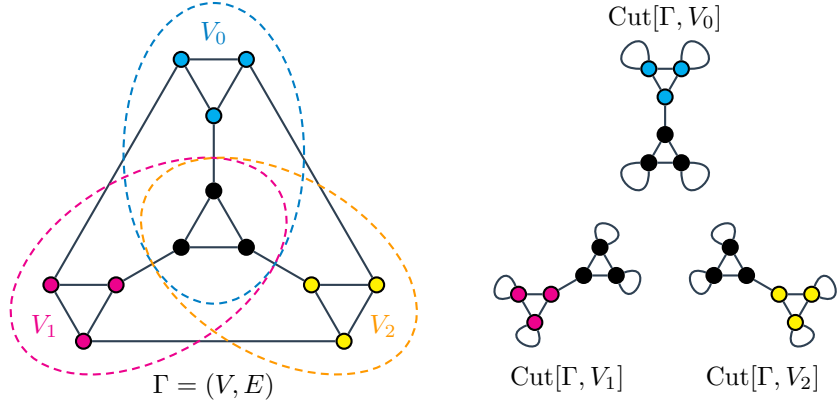

\begin{definition}[Folding {\cite[Definition 6]{DMR25}}]
  Let $\Gamma=(V,E)$ be an $n$-RIM and let $(V_0,...,V_{\nombrecoupes-1})$ be a flowering cut of $V$, using $\varphi_\icoupes:V_0\to V_\icoupes$ as RIM isomorphism for $\icoupes\in\{0,...,\nombrecoupes-1\}$.
  For $f\in\mots{\Gamma}$ and $\randomness\in\bbF$, we define the \emph{folding} of $f$ over $\randomness$, denoted $\pliage{f}{\randomness}$, as the word of $\mots{\coupe{\Gamma}{V_0}}$ such that
  \begin{equation*}
    \label{eq:fold}
    \pliage{f}{\randomness}:(v,\iaretes)\mapsto\sum_{\icoupes=0}^{\nombrecoupes-1}\randomness^\icoupes\coupe{f}{V_\icoupes}(\varphi_\icoupes(v),\iaretes)\text.
  \end{equation*}
\end{definition}

\noindent
We will want to apply recursively such cuts and foldings, until we reach a graph composed of a single vertex with only petals, that we call the \emph{flower}.

\begin{definition}[Blossoming graph sequence]
  Let $\Gamma_0,...,\Gamma_\rounds$ be a sequence of graphs such that $\Gamma_\rounds$ only has one vertex.
  If for every $\irounds\in[\rounds]$ there exists a flowering cut collections $(V_{\irounds,0},...,V_{\irounds,\nombrecoupes-1})$ from $\Gamma_{\irounds-1}$ to $\Gamma_\irounds$, then we say that $(\Gamma_0,...,\Gamma_\rounds)$ is a \emph{blossoming graph sequence} of $\Gamma_0$.
\end{definition}

\noindent
The blossoming graph sequences can also be defined for flowering cut collections of different order $\nombrecoupes_\irounds$ at each step $\irounds$.
\Cref{fig:blossoming-sequence} illustrates such a blossoming graph sequence, alternating cuts of order 3 and cuts of order 2.

\begin{figure}[h!]
  \centering
  \begin{tikzpicture}
    \foreach \i/\c in {0/magenta,1/cyan,2/yellow} {
      \node [mininoeud] (\i) at (\i/3*360+90:0.4) {};
      \foreach \j in {0,1,2} {
        \node [mininoeud] (\j\i) at ($(\i/3*360+90:1.5) + (\j/3*360+\i/3*360-90:0.4)$) {};
      }
    }
    \draw [areteB] (0) to (1) to (2) to (0);
    \foreach \i/\j in {0/2,1/0,2/1} {
      \draw [areteA] (0\i) to (\i);
      \draw [areteA] (1\i) to (2\j);
      \draw [areteB] (0\i) to (1\i) to (2\i) to (0\i);
    }
    \node at (0,-2) {$\Gamma_0$};
    \foreach \r in {0,1,2} {
      \begin{scope}[rotate=\r/3*360]
        \draw [coupe] (0,0.75) circle (1 and 1.5);
      \end{scope}
    }
    \node [noeudscoupe] at (0,2.6) {$V_{1,0}$};
    \node [noeudscoupe] at (-1.1,-1.8) {$V_{1,1}$};
    \node [noeudscoupe] at (1.1,-1.8) {$V_{1,2}$};

    \begin{scope}[xshift=3cm, yshift=-0.5cm]
      \foreach \i in {0,1,2} {
        \node [mininoeud] (A\i) at ($(\i/3*360+90:0.4)$) {};
        \node [mininoeud] (B\i) at ($(90:1.5) + (\i/3*360-90:0.4)$) {};
      }
      \draw [areteB] (A0) to (A1) to (A2) to (A0);
      \draw [areteB] (B0) to (B1) to (B2) to (B0);
      \draw [areteA] (A0) to (B0);
      \foreach \n [count=\i] in {B1,B2,A1,A2} {
        \draw [areteA, miniboucle, in=90*\i-45+45, out=90*\i-45-45] (\n) to (\n);
      }
      \draw [coupe] (0,-0.2) circle (0.8 and 0.85);
      \draw [coupe] (0,1.7) circle (0.8 and 0.85);
      \node [noeudscoupe] at (0,2.2) {$V_{2,1}$};
      \node [noeudscoupe] at (0,-0.7) {$V_{2,0}$};

      \node at (0,-1.5) {\scalebox{0.9}{$\Gamma_1:=\coupe{\Gamma_0}{V_{1,0}}$}};
    \end{scope}

    \begin{scope}[xshift=5.8cm]
      \foreach \i in {0,1,2} {
        \node [mininoeud] (\i) at ($(\i/3*360+90:0.5)$) {};
        \draw [areteA, miniboucle, in=360/3*\i+90-45, out=360/3*\i-45+180] (\i) to (\i);
        \draw [coupe] (\i/3*360+90:0.6) circle (0.45);
      }
      \node [noeudscoupe] at (0,1.3) {$V_{3,0}$};
      \node [noeudscoupe] at (-0.5,-1) {$V_{3,1}$};
      \node [noeudscoupe] at (0.5,-1) {$V_{3,2}$};
      \draw [areteB] (0) to (1) to (2) to (0);

      \node at (0,-2) {\scalebox{0.9}{$\Gamma_2:=\coupe{\Gamma_1}{V_{2,0}}$}};
    \end{scope}

    \begin{scope}[xshift=8.6cm]
      \node [mininoeud] (fleur) {};
      \foreach \i/\a in {0/areteA,1/areteB,2/areteB} {
        \draw [\a, miniboucle, in=360/3*\i+90-45, out=360/3*\i+90+45] (fleur) to (fleur);
      }
      \node at (0,-2) {\scalebox{0.9}{$\Gamma_3:=\coupe{\Gamma_2}{V_{3,0}}$}};
    \end{scope}
  \end{tikzpicture}
  \caption{Example of blossoming graph sequence. $\Gamma_3$ is a flower with only petals.}
  \label{fig:blossoming-sequence}
\end{figure}
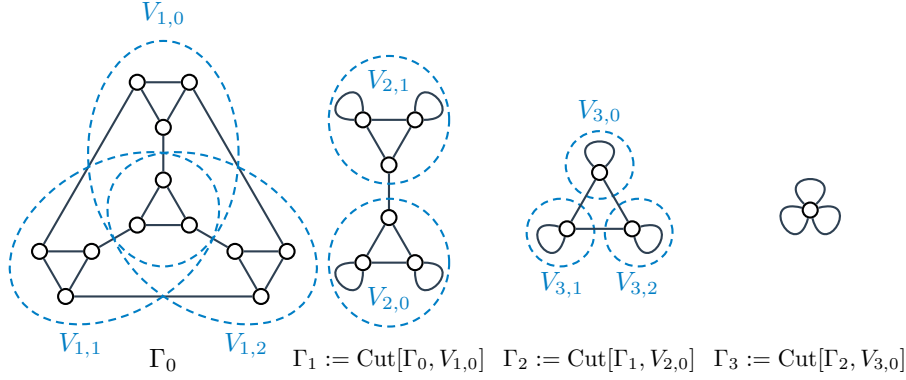

\noindent
Along with the blossoming graph sequence, we define recursively a \emph{sequence of weights} $w_0,...,w_\rounds$ such that, for $\irounds\in[\rounds]$, $w_\irounds:V_\irounds\to\bbR^+$.
Start by the uniform weight, $w_0:v\mapsto 1$. Therefore $\Delta_{w_0}=\Delta_V$.
Then $w_{\irounds+1}$ is defined on $V_{\irounds+1}$ as
\[
  w_{\irounds+1}(v):=\frac{w_\irounds(v)}{\superposition{v}}\text.
\]
Using \cref{eq:symmetry-cuts} we have by induction that these weights are invariant under the RIM isomorphisms, as stated in \Cref{proposition:symmetry-weight}.

\begin{proposition}[Symmetry of the weights]
  \label{proposition:symmetry-weight}
  Let $\Gamma_0,...,\Gamma_\rounds$ be a blossoming graph sequence and $w_0,...,w_\rounds$ be a sequence of weights defined as above.
  Then for $\irounds\in\{0,...,\rounds\}$, $v\in V_\irounds$ and $\icoupes\in\{0,...,\nombrecoupes-1\}$, we have
  \begin{equation}
    \label{eq:symmetry-weight}
    w_{\irounds}(v)=w_{\irounds}(\varphi_\icoupes(v))\text.
  \end{equation}
\end{proposition}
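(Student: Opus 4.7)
The plan is to proceed by induction on $\irounds$. The base case $\irounds=0$ is immediate: since $w_0\equiv 1$ is constant, it is trivially invariant under any RIM isomorphism, and \cref{eq:symmetry-weight} holds.

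For the inductive step, I would assume \cref{eq:symmetry-weight} for $w_\irounds$, fix $v\in V_{\irounds+1}$ and a cut index $\icoupes$, and simply unfold the recursive definition of $w_{\irounds+1}$. This gives
\[
  w_{\irounds+1}(\varphi_\icoupes(v))=\frac{w_\irounds(\varphi_\icoupes(v))}{\superposition{\varphi_\icoupes(v)}}.
\]
Two ingredients collapse the right-hand side to $w_{\irounds+1}(v)$: the induction hypothesis rewrites $w_\irounds(\varphi_\icoupes(v))$ as $w_\irounds(v)$, and the symmetry constraint \cref{eq:symmetry-cuts} built into the definition of a flowering cut collection rewrites $\superposition{\varphi_\icoupes(v)}$ as $\superposition{v}$. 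Dividing the first by the second then yields $w_{\irounds+1}(v)$ by definition, closing the induction.

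No genuine obstacle is expected: the proof is essentially a two-line induction. The only real care is in the bookkeeping of which family of isomorphisms is referenced at each round, so that both $v$ and $\varphi_\icoupes(v)$ lie in the domain $V_\irounds$ of $w_\irounds$ to which the recursion applies. This is guaranteed by the fact that each cut $V_{\irounds+1,\icoupes}$ used at round $\irounds+1$ is a subset of $V_\irounds$, so the image of $\varphi_\icoupes$ stays inside the previous vertex set, and both terms appearing in the inductive computation are well-defined.
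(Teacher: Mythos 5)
Your induction is exactly the sketch the paper itself gestures at (the text only says ``Using \cref{eq:symmetry-cuts} we have by induction\dots''), but as written the inductive step does not close, and the problem is precisely the round-bookkeeping you dismiss as routine. The weight $w_{\irounds+1}$ is defined on $V_{\irounds+1}=V_{\irounds+1,0}$ through the multiplicities of the cut collection $(V_{\irounds+1,0},\dots,V_{\irounds+1,\nombrecoupes-1})$ of $\Gamma_\irounds$, while the invariance actually used downstream (in \Cref{proposition:commit-soundness} and \Cref{lemma:weighted-probabilities}) is invariance of $w_\irounds$ under the isomorphisms of the \emph{next} collection. Now look at your key line $w_{\irounds+1}(\varphi_\icoupes(v))=w_\irounds(\varphi_\icoupes(v))/\superposition{\varphi_\icoupes(v)}$. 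If $\varphi_\icoupes$ is a round-$(\irounds+1)$ isomorphism, then $\varphi_\icoupes(v)\in V_{\irounds+1,\icoupes}$, which lies in $V_\irounds$ but in general \emph{not} in $V_{\irounds+1}$, so $w_{\irounds+1}(\varphi_\icoupes(v))$ is not even defined: your ``image stays inside the previous vertex set'' argument only lands you in $V_\irounds$, which is not the domain of $w_{\irounds+1}$. If instead $\varphi_\icoupes$ is a round-$(\irounds+2)$ isomorphism (the one the later applications need, and the only choice that keeps the expression well defined), then both rewritings are unjustified: the induction hypothesis gives invariance of $w_\irounds$ under the round-$(\irounds+1)$ maps, not the round-$(\irounds+2)$ maps, and \cref{eq:symmetry-cuts} ties each collection's multiplicity only to its \emph{own} isomorphisms, whereas you need the round-$(\irounds+1)$ multiplicity to be invariant under a round-$(\irounds+2)$ map.

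This is not a cosmetic slip, because the cross-round invariance you silently invoke can genuinely fail under the stated hypotheses. In the paper's own running example (\Cref{fig:cut-collection,fig:blossoming-sequence}), after the first cut the three central vertices have multiplicity $3$, hence $w_1=1/3$ there, while the outer vertices of $V_{1,0}$ have $w_1=1$; the second-round cuts are the two disjoint triangles and the isomorphism $\varphi_{2,1}$ carries the central triangle onto the outer one, so $w_1(\varphi_{2,1}(v))=1\neq 1/3=w_1(v)$, even though \cref{eq:symmetry-cuts} holds at both rounds. Unrolling the recursion, $w_\irounds(v)$ is the reciprocal of the product of the multiplicities of $v$ in the collections of rounds $1,\dots,\irounds$, so what is really needed is that each \emph{earlier} multiplicity function be invariant under all \emph{later} isomorphisms; this does not follow from \cref{eq:symmetry-cuts} alone and must either be imposed as an additional hypothesis on the blossoming sequence or verified directly for the Cayley cuts of Section~4 (\Cref{proposition:cayley-cut-are-symmetric} only proves the same-round statement). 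Your two-line induction, like the one-line remark in the paper it expands, does not supply that ingredient.
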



\section{Flowering protocol}

\subsection{Distance reduction probability}

By applying the folding operator, the Prover reduces testing proximity to $f\in\mots{\Gamma}$ to testing proximity to $f'\in\mots{\Gamma'}$, and one must ensure that this reduction does not make $f'$ too much closer to $\code{\Gamma'}{k}$ than $f$ was to $\code{\Gamma}{k}$.
The commit soundness from \cite[Proposition 3]{DMR25} do not apply exactly with the generalization of the cuts to $\nombrecoupes$ non-disjoint cuts.
We generalize this result in \Cref{proposition:commit-soundness} using the weighted distances.

\begin{proposition}[Commit soundness]
  \label{proposition:commit-soundness}
  Let $\epsilon>0$ and $k<n$.
  Let $\Gamma=(V,E)$ be an $n$-RIM and $V_0,...,V_{\nombrecoupes-1}\subseteq V$ be a flowering cut from $\Gamma$ to $\Gamma'=\coupe{\Gamma}{V_0}$, with $\varphi_\icoupes:V_0\to V_\icoupes$ as isomorphism.
  Let $w:V\to\bbR^+$ be a weight function invariant under the isomorphisms $\varphi_\icoupes$, i.e. satisfying \cref{eq:symmetry-weight}.
  Denote $\tilde{w}:V\to\bbR^+$ the weight function such that $\tilde{w}(v)=\frac{w(v)}{\superposition{v}}$.
  Denote $C:=\code{\Gamma}{k}$ and $C':=\code{\Gamma'}{k}$.
  Let $f\in\mots{\Gamma}$.
  Then
  \begin{equation*}
    \label{eq:commit-soundness}
    \underset{\randomness\in\bbF}{\Pr}\left[\Delta_{\tilde{w}}(\pliage{f}{\randomness},C')<\Delta_w(f,C)-\epsilon\right]\leq\frac{\nombrecoupes-1}{\epsilon|\bbF|}\text.
  \end{equation*}
\end{proposition}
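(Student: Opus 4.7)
The plan is to prove the contrapositive: assuming $\Pr_\rho[\Delta_{\tilde{w}}(\pliage{f}{\rho}, C') < \delta - \epsilon] > (\nombrecoupes-1)/(\epsilon|\bbF|)$ with $\delta := \Delta_w(f,C)$, I will exhibit a codeword of $C$ strictly closer to $f$ than $\delta$, contradicting the minimality of $\delta$. The argument generalizes the commit-soundness proof of~\cite[Proposition~3]{DMR25}; the factor $\nombrecoupes-1$ arises from a Schwartz--Zippel bound on polynomials of degree $\nombrecoupes-1$, and the weighting $\tilde{w}=w/\superposition{\cdot}$ combined with the invariance \cref{eq:symmetry-cuts} is what makes the counting close in the overlapping-cut setting.

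First I fix $g \in C$ realising $\Delta_w(f,g) = \delta$, let $T := \{v \in V : f(v,\cdot) \neq g(v,\cdot)\}$ with $|T|_w = \delta |V|_w$, and replace $f$ by $h := f - g$; since $\pliage{g}{\rho} \in C'$ for every $\rho$, this translation leaves $\Delta_{\tilde{w}}(\pliage{f}{\rho}, C')$ unchanged. At each $v' \in V_0$, the expression $\pliage{h}{\rho}(v',\cdot) = \sum_{\icoupes} \rho^{\icoupes} h(\varphi_{\icoupes}(v'),\cdot)$ is a polynomial in $\rho$ of degree at most $\nombrecoupes-1$ whose $\icoupes$-th coefficient vanishes iff $\varphi_{\icoupes}(v') \notin T$. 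Its image in the quotient $\bbF^n/\RS[n,k]$ is identically zero exactly when every $f(\varphi_{\icoupes}(v'),\cdot)$ already lies in $\RS[n,k]$; otherwise, coordinatewise Schwartz--Zippel forbids $\pliage{h}{\rho}(v',\cdot) \in \RS[n,k]$ for all but at most $\nombrecoupes-1$ values of $\rho$.

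On the set $E := \{\rho : \Delta_{\tilde{w}}(\pliage{f}{\rho}, C') < \delta - \epsilon\}$, assumed to have size strictly greater than $(\nombrecoupes-1)/\epsilon$, I pick for each $\rho \in E$ a closest $g'_\rho \in C'$ and let $A_\rho := \{v' \in V_0 : \pliage{f}{\rho}(v',\cdot) = g'_\rho(v',\cdot)\}$, whose $\tilde{w}$-weight strictly exceeds $(1 - \delta + \epsilon)|V_0|_{\tilde{w}}$. Double-counting the pairs $(\rho,v') \in E \times V_0$ with $v' \in A_\rho$ and using the threshold on $|E|$ yields the set $V_0^* := \{v' \in V_0 : |\{\rho \in E : v' \in A_\rho\}| \geq \nombrecoupes\}$ with $|V_0 \setminus V_0^*|_{\tilde{w}} < \delta |V_0|_{\tilde{w}}$. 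At each $v' \in V_0^*$, Vandermonde inversion on the $\nombrecoupes$ distinct $\rho$'s for which $\pliage{f}{\rho}(v',\cdot) = g'_\rho(v',\cdot) \in \RS[n,k]$ expresses each $f(\varphi_{\icoupes}(v'),\cdot)$ as a linear combination of $\RS[n,k]$-elements, hence itself in $\RS[n,k]$.

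The invariances $\superposition{\varphi_{\icoupes}(v)} = \superposition{v}$ and $w(\varphi_{\icoupes}(v)) = w(v)$ (\cref{eq:symmetry-cuts,eq:symmetry-weight}) together with $|V_{\icoupes}|_{\tilde{w}} = |V_0|_{\tilde{w}} = |V|_w/\nombrecoupes$ (which follows from $\sum_{\icoupes} |V_{\icoupes}|_{\tilde{w}} = |V|_w$) convert the $\tilde{w}$-bound above into $|V \setminus V^*|_w < \delta |V|_w$ with $V^* := \bigcup_{\icoupes} \varphi_{\icoupes}(V_0^*)$. Since $f(v,\cdot) \in \RS[n,k]$ for every $v \in V^*$, $f$ can be completed into a codeword $g^* \in C$ by modifying only on edges interior to $V \setminus V^*$, yielding $\Delta_w(f,g^*) \leq |V \setminus V^*|_w/|V|_w < \delta$ and contradicting the minimality of $\delta$. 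The hardest step is this final completion: patching the locally $\RS$-valid views on $V^*$ into an honest codeword on all of $\Gamma$ relies on the MDS property of $\RS[n,k]$ and on the remaining edge-freedom interior to $V \setminus V^*$, and the choice $\tilde{w} = w/\superposition{\cdot}$ is tuned precisely so that the $V_0^* \to V^*$ transfer is lossless and the sharp constant $(\nombrecoupes-1)/|\bbF|$ falls out.
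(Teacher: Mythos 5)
Your argument up to the construction of $V_0^*$ is correct and, despite the contrapositive packaging, is essentially the paper's counting in disguise: your Vandermonde observation that $\nombrecoupes$ distinct good values of $\randomness$ at a vertex $v'\in V_0$ force every $f(\varphi_\icoupes(v'),\cdot)$ to lie in $\RS[n,k]$ is exactly the contrapositive of the paper's bound $|A_{v_0}|\leq\nombrecoupes-1$, your double counting over $E\times V_0$ is the same averaging the paper performs to bound the set of bad $\randomness$, and the transfer $|V\setminus V^*|_w<\delta|V|_w$ is the paper's weight computation \cref{eq:inequality-V-V0,eq:inequality-T-Tprime} read in the other direction. (The translation by $g$ and the Schwartz--Zippel remark are harmless but redundant: the bound $\leq\nombrecoupes-1$ for $v'\notin V_0^*$ is already definitional in your double counting.)

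The genuine gap is your last step. You need a codeword $g^*\in\code{\Gamma}{k}$ whose local view agrees with that of $f$ at every vertex of $V^*$, and you propose to obtain it by modifying $f$ only on edges with both endpoints in $V\setminus V^*$, invoking the MDS property of $\RS[n,k]$. This is not proved, and as a standalone claim it is false: if a vertex $v\in V\setminus V^*$ has all its neighbours in $V^*$, then no edge incident to $v$ may be modified, so $g^*(v,\cdot)=f(v,\cdot)$ would have to be an $\RS[n,k]$ word, which it need not be; even when interior edges exist, the completions at different vertices of $V\setminus V^*$ share edges and need not be simultaneously satisfiable, and nothing in your setup rules such configurations out. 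The entire contradiction rests on this step. What you actually need there is the identification of $\Delta_w(f,\code{\Gamma}{k})$ with the $w$-weighted fraction of vertices whose local view is not in $\RS[n,k]$: the paper invokes exactly this in one line (``by definition of the distance, $|T|_w=\delta|V|_w$'') and never constructs any codeword. If you grant that identification, your completion argument is unnecessary, since every invalid vertex of $f$ lies in $V\setminus V^*$ and hence $\delta\leq|V\setminus V^*|_w/|V|_w<\delta$ is already a contradiction; if you do not grant it, your MDS/edge-freedom sketch does not supply it. Either way, the proof as written is incomplete at precisely the point you flag as the hardest.
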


\begin{proof}
  Denote $\delta:=\Delta_w(f,C)$ and assume that $\delta>0$ (otherwise $\delta-\epsilon<0$ and the proof is trivial).
  We say that a vertex $v\in V$ is \emph{valid} w.r.t. $f$ if $f(v,\cdot)\in\RS[n,k]$.
  Let $T:=\{v\in V\mid f(v,\cdot)\notin\RS[n,k]\}$ be the vertices of $\Gamma$ that are not valid w.r.t. $f$, and $T':=\bigcup_{\icoupes=0}^{\nombrecoupes-1}\varphi_\icoupes^{-1}(V_\icoupes\cap T)\subseteq V_0$ be the set of vertices $v_0\in V_0$ for which $\pliage{f}{\randomness}(v_0,\cdot)$ is built from at least a vertex $\varphi_\icoupes(v_0)\in V_\icoupes$ for $\icoupes\in\{0,...,\nombrecoupes\}$ that is not valid w.r.t. $f$.

  We now prove \cref{eq:inequality-Tprime}.
  \begin{equation}
    \label{eq:inequality-Tprime}
    |T'|_{\tilde{w}}\geq\frac{|T|_w}{\nombrecoupes}=\frac{\delta|V|_w}{\nombrecoupes}=\delta|V_0|_{\tilde{w}}\text.
  \end{equation}
  By using \cref{eq:symmetry-cuts,eq:symmetry-weight}, we have
  \begin{equation}
    |V|_w
    =\sum_{v\in V}w(v)
    =\sum_{\icoupes=0}^{\nombrecoupes-1}\sum_{v\in V_\icoupes}\frac{w(v)}{\superposition{v}}
    =\sum_{\icoupes=0}^{\nombrecoupes-1}\sum_{v\in V_0}\frac{w(\varphi_\icoupes(v))}{\superposition{\varphi_\icoupes(v)}}
    =\nombrecoupes|V_0|_{\tilde{w}}\text.\label{eq:inequality-V-V0}
  \end{equation}
  Furthermore, since for any $\icoupes\in\{0,..,\nombrecoupes-1\},\varphi_\icoupes^{-1}(V_\icoupes\cap T)\subseteq T'$, we have by similar arguments that
  \begin{equation}
    |T|_w
    =\sum_{\icoupes=0}^{\nombrecoupes-1}\sum_{v\in V_\icoupes\cap T}\tilde{w}(v)
    =\sum_{\icoupes=0}^{\nombrecoupes-1}\sum_{v\in \varphi_\icoupes^{-1}(V_\icoupes\cap T)}\tilde{w}(\varphi_\icoupes(v))
    \leq\sum_{\icoupes=0}^{\nombrecoupes-1}\sum_{v\in T'}\tilde{w}(v)
    =\nombrecoupes|T'|_{\tilde{w}}\text.\label{eq:inequality-T-Tprime}
  \end{equation}
  By definition of the distance, $|T|_w=\delta|V|_w$.
  Hence by \cref{eq:inequality-V-V0,eq:inequality-T-Tprime}, we obtain \cref{eq:inequality-Tprime} as expected.

  For $\randomness\in\bbF$, denote
  \[
    S_\randomness:=\{v_0\in T'\mid \pliage{f}{\randomness}(v_0,\cdot)\in\RS[n,k]\}
  \]
  the vertices of $\Gamma'$ whose fold was built from non-valid vertices w.r.t. $f$, but are made valid w.r.t. $\pliage{f}{\randomness}$ because of $\randomness$.
  Then
  \begin{align*}
    &\Pr(\Delta_{\tilde{w}}(\pliage{f}{\randomness},\RS[n,k])<\delta-\epsilon)\\
    &=\Pr(|\{v_0\in V_0\mid \pliage{f}{\randomness}(v_0,\cdot)\notin\RS[n,k]\}|_{\tilde{w}}<(\delta-\epsilon)|V_0|_{\tilde{w}})\\
    &=\Pr(|\{v_0\in T'\mid \pliage{f}{\randomness}(v_0,\cdot)\notin\RS[n,k]\}|_{\tilde{w}}<(\delta-\epsilon)|V_0|_{\tilde{w}})\\
    &=\Pr(|S_\randomness|_{\tilde{w}}>|T'|_{\tilde{w}}-(\delta-\epsilon)|V_0|_{\tilde{w}})\\
    &\leq\Pr(|S_\randomness|_{\tilde{w}}>\delta|V_0|_{\tilde{w}}-(\delta-\epsilon)|V_0|_{\tilde{w}})\tag{by \cref{eq:inequality-Tprime}}\\
    &=\Pr(|S_\randomness|_{\tilde{w}}>\epsilon|V_0|_{\tilde{w}})\text,
  \end{align*}
  i.e. with $A:=\{\randomness\in\bbF\mid |S_\randomness|_{\tilde{w}}>\epsilon|V_0|_{\tilde{w}}\}$,
  \begin{equation}
    \label{eq:bound-with-A}
    \underset{\randomness\in\bbF}{\Pr}(\Delta_{\tilde{w}}(\pliage{f}{\randomness},C')<\delta-\epsilon)\leq\frac{|A|}{|\bbF|}\text.
  \end{equation}

  \noindent
  We now provide a bound on $|A|$.

  Let $v_0\in T'$ and denote $A_{v_0}:=\{\randomness\in\bbF\mid \pliage{f}{\randomness}(v_0,\cdot)\in\RS[n,k]\}$.
  For $\icoupes=\{0,...,\nombrecoupes-1\}$, denote $\sum_{\iaretes=0}^{d_\icoupes}a_{\icoupes,\iaretes}X^\iaretes$ the least degree polynomial interpolating $\coupe{f}{V_\icoupes}$.
  Let $d:=\max_\icoupes d_\icoupes$.
  Since $v_0\in T'$, there exists $\icoupes$ such that $\coupe{f}{V_\icoupes}(\varphi_\icoupes(v_0),\cdot)\notin\RS[n,k]$ and hence $d\geq d_\icoupes\geq k$.

  Moreover, $\pliage{f}{\randomness}(v_0,\cdot)\in\RS[n,k]$ only if the polynomial interpolating $\pliage{f}{\randomness}(v_0,\cdot)$ has degree $<d$.
  By construction, this polynomial is
  \[
    \sum_{\iaretes=0}^d\left(\sum_{\icoupes=0}^{\nombrecoupes-1}a_{\icoupes,\iaretes}\randomness^\icoupes\right)X^\iaretes\text,
  \]
  hence it has degree $<d$ only if $\randomness$ is a root of the polynomial $\sum_{\icoupes=0}^{\nombrecoupes-1}a_{\icoupes,d}Y^\icoupes$, that has degree $\leq \nombrecoupes-1$.
  Therefore $|A_{v_0}|\leq \nombrecoupes-1$.

  The bound on $|A|$ is the following.
  On the one hand, by definition of $A$,
  \[
    \sum_{\randomness\in A}\sum_{v_0\in T'}\mathds{1}_{A_{v_0}}(\randomness)\tilde{w}(v_0)=\sum_{\randomness\in A}\sum_{v_0\in S_\randomness}\tilde{w}(v_0)=\sum_{\randomness\in A}|S_\randomness|_{\tilde{w}}>\epsilon|V_0|_{\tilde{w}}|A|\text,
  \]
  and on the other hand,
  \[
    \sum_{v_0\in T'}\sum_{\randomness\in A}\mathds{1}_{A_{v_0}}(\randomness)\tilde{w}(v_0)=\sum_{v_0\in T'}\tilde{w}(v_0)|A_{v_0}|\leq(\nombrecoupes-1)|T'|_{\tilde{w}}\leq(\nombrecoupes-1)|V_0|_{\tilde{w}}\text.
  \]
  Thus $|A|\leq\frac{\nombrecoupes-1}{\epsilon}$ and by \cref{eq:bound-with-A} we obtain the result.\QED
\end{proof}


\subsection{Flowering protocol}

Let $\Gamma_0=(V_0,E_0),...,\Gamma_\rounds=(V_\rounds,E_\rounds)$ be a blossoming graph sequence.
For $\irounds\in[\rounds]$, let $(V_{\irounds,0},...,V_{\irounds,\nombrecoupes-1})$ be a flowering cut collection from $\Gamma_{\irounds-1}$ to $\Gamma_\irounds$.

\begin{definition}[Flowering protocol]
  \label{definition:flowering-protocol}
  The flowering protocol is composed of two phases, the commit phase which is interactive, and the query phase, where the Verifier opens the commits.
  There are two complexity parameters that one can fix: the number $\repetitions$ of repetitions of the query phase, and the proportion $\proportioncheck\in[0,1]$ of edges of $[n]$ checked at each step.
  On input $f_0\in\mots{\Gamma_0}$, the protocol runs as follows:

  \textsc{Commit phase:} for $\irounds$ from $0$ to $\rounds-1$, the Verifier picks and send $\randomness_{\irounds}\overset{\$}{\leftarrow}\bbF$ to the Prover, and the Prover provides to the Verifier an oracle access to a word $f_{\irounds+1}\in\mots{\Gamma_{\irounds+1}}$.

  \textsc{Query phase:} for $\irepetitions\in[\repetitions]$, the Verifier picks $v_{0,\irepetitions}\overset{\$}{\leftarrow}V_0$ and a random set $\sousensemblearetes_\irepetitions\subseteq[n]$ of size $\proportioncheck n$.
  For $\irounds\in[\rounds]$, the Verifier picks randomly $\icoupes_{\irounds,\irepetitions}\overset{\$}{\leftarrow}\{\icoupes\mid v_{\irounds,\irepetitions}\in V_{\irounds,\icoupes}\}$, computes $v_{\irounds,\irepetitions}:=\varphi_{\irounds-1,\icoupes_{\irounds,\irepetitions}}^{-1}(v_{\irounds-1,\irepetitions})$, and checks that
  \begin{equation*}
    \label{eq:check-fold}
    \forall\iaretes\in \sousensemblearetes_\irepetitions, \pliage{f_{\irounds-1}}{\randomness_{\irounds-1}}(v_{\irounds,\irepetitions},\iaretes)=f_\irounds(v_{\irounds,\irepetitions},\iaretes)
  \end{equation*}
  by making $\proportioncheck n\nombrecoupes$ queries if $\irounds=1$, or $\proportioncheck n(\nombrecoupes-1)$ queries if $\irounds\geq 2$, to $f_{\irounds-1}$ and $\proportioncheck n$ queries to $f_\irounds$.
  Finally with $v_\rounds$ the flower of $\Gamma_\rounds$, the Verifier checks that
  \begin{equation*}
    \label{eq:check-final}
    f_\rounds(v_\rounds,\cdot)\in\RS[n,k]\text.
  \end{equation*}
  The Verifier accepts if and only if all checks pass.
\end{definition}

\noindent
The query phase is adapted from~\cite{DMR25} to the cases where the cuts are not disjoint.
When there are $\nombrecoupes=2$ disjoint cuts, the Verifier does not have a choice for the next vertex $v_{\irounds+1,\irepetitions}$: either $v_{\irounds,\irepetitions}\in V_{\irounds,0}$ and $v_{\irounds+1,\irepetitions}:=v_{\irounds,\irepetitions}$, either $v_{\irounds,\irepetitions}\notin V_{\irounds,0}$ and $v_{\irounds+1,\irepetitions}:=\varphi_{\irounds,1}^{-1}(v_{\irounds,\irepetitions})$.
Here, $v_{\irounds,\irepetitions}$ belongs to $\superposition{v_\irounds}$ different $V_{\irounds,\icoupes}$'s, hence the Verifier has to pick one of these to pick $v_{\irounds+1,\irepetitions}$.

Since picking uniformly a subset $\sousensemblearetes_\irepetitions\subseteq[n]$ of size $\proportioncheck n$ is hard for $\proportioncheck\approx 1/2$, one can focus on the two extreme cases $\proportioncheck=1/n$, to focus on the Verifier complexity, or $\proportioncheck=1$, to focus on the soundness, as we do in \Cref{tab:comparison-complexity,tab:comparison-soundness}.

We give the complexity properties of the protocol in \Cref{theorem:complexities}, and the soundness property in \Cref{theorem:soundness}.
The protocol easily has perfect completeness: the Prover just has to send $f_\irounds=\pliage{f_{\irounds-1}}{\randomness_{\irounds-1}}$ at each step $\irounds\in[\rounds]$.

\begin{theorem}[Complexity of the flowering protocol {\cite[Theorem 1]{DMR25}}]
  \label{theorem:complexities}
  The flowering protocol defined in \Cref{definition:flowering-protocol} has the following complexities
  \begin{itemize}
  \item Prover complexity: $(2\nombrecoupes-1)\sum_{\irounds=1}^\rounds|\aretes{\Gamma_\irounds}|<(2\nombrecoupes-1)\rounds|\aretes{\Gamma_0}|$ field operations
  \item Verifier complexity: $\rounds\repetitions\proportioncheck n(\nombrecoupes+2)+n\log n$ field operations
  \item Query complexity: $(\nombrecoupes\rounds+\nombrecoupes-1)\proportioncheck n\repetitions+n$
  \item Round complexity: $\rounds$
  \item Randomness complexity: $\rounds$ field elements and cuts, $\repetitions$ vertices and subsets of $[n]$
  \item Proof length: $\sum_{\irounds=1}^\rounds|\aretes{\Gamma_\irounds}|<\rounds|\aretes{\Gamma_0}|$ field elements
  \end{itemize}
\end{theorem}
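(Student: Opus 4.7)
The proof is a direct bookkeeping exercise: each item is read off from \Cref{definition:flowering-protocol}. I would prove the six items in order of increasing technicality, starting with the trivial ones and finishing with the prover and query complexities, where a small sharing argument is needed.

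The round complexity is immediate: the commit phase consists of exactly $\rounds$ rounds indexed by $\irounds\in\{0,\dots,\rounds-1\}$, and the query phase is non-interactive. The randomness complexity is also a direct read-off: at each of the $\rounds$ rounds of the commit phase the Verifier samples one $\randomness_\irounds\in\bbF$, and for each of the $\repetitions$ repetitions of the query phase the Verifier samples a starting vertex $v_{0,\irepetitions}\in V_0$, a subset $\sousensemblearetes_\irepetitions\subseteq[n]$, and one choice of cut $\icoupes_{\irounds,\irepetitions}$ per round. For the proof length, the Prover sends one oracle per round, namely $f_\irounds\in\mots{\Gamma_\irounds}$ for $\irounds\in[\rounds]$, giving $\sum_{\irounds=1}^\rounds |\aretes{\Gamma_\irounds}|$ field elements; the coarse bound follows from $|\aretes{\Gamma_\irounds}|\leq|\aretes{\Gamma_0}|$, which holds since each cutting operation only removes edges (leaves edges petal-forming in place but does not add edges outside the cut).

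For the prover complexity, the Prover honestly sets $f_\irounds=\pliage{f_{\irounds-1}}{\randomness_{\irounds-1}}$. By the definition of the fold, each coordinate of $f_\irounds$ is a linear combination of $\nombrecoupes$ values from $f_{\irounds-1}$, of the form $\sum_{\icoupes=0}^{\nombrecoupes-1}\randomness^\icoupes\coupe{f_{\irounds-1}}{V_\icoupes}(\varphi_\icoupes(v),\iaretes)$. Evaluating this by Horner's rule (or by computing the successive powers of $\randomness$ once and then performing $\nombrecoupes$ multiplications and $\nombrecoupes-1$ additions) costs $2\nombrecoupes-1$ field operations per edge of $\Gamma_\irounds$, which sums to $(2\nombrecoupes-1)\sum_{\irounds=1}^\rounds|\aretes{\Gamma_\irounds}|$. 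The upper bound follows from the same edge-monotonicity argument as above.

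Finally I would count the verifier and query complexities together, as they share the same combinatorial structure. At round $\irounds$ of a fixed repetition $\irepetitions$, the Verifier needs the $\nombrecoupes$ preimages $\varphi_{\irounds-1,\icoupes}^{-1}(v_{\irounds-1,\irepetitions})$ of $v_{\irounds-1,\irepetitions}$ in order to compute, for each $\iaretes\in\sousensemblearetes_\irepetitions$, the fold $\pliage{f_{\irounds-1}}{\randomness_{\irounds-1}}(v_{\irounds,\irepetitions},\iaretes)$. Each fold evaluation uses $\nombrecoupes$ queries to $f_{\irounds-1}$ and $1$ query to $f_\irounds$; the key observation is that for $\irounds\geq 2$ the value at the preimage corresponding to $\icoupes_{\irounds,\irepetitions}$ is exactly $f_{\irounds-1}(v_{\irounds-1,\irepetitions},\iaretes)$, which the Verifier already queried when verifying the previous round (or obtained as $f_1$ at the first round). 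This saves $\proportioncheck n$ queries per round starting at $\irounds\geq 2$, yielding $\proportioncheck n(\nombrecoupes-1)$ queries to $f_{\irounds-1}$ and $\proportioncheck n$ to $f_\irounds$, and accounting for the asymmetric bound stated. Summing over $\irounds\in[\rounds]$ and $\irepetitions\in[\repetitions]$ and adding the $n$ queries for the final $\RS[n,k]$-membership test at $v_\rounds$ gives the stated query complexity. The verifier runs $\nombrecoupes$ multiplications and $\nombrecoupes-1$ additions per fold evaluation, plus one comparison, plus one index lookup, totalling $(\nombrecoupes+2)$ operations per $(\iaretes,\irounds,\irepetitions)$ triple; the final $n\log n$ term comes from the Reed–Solomon membership test performed once on $f_\rounds(v_\rounds,\cdot)$ by fast interpolation. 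The only mildly subtle step in the whole argument is this query-reuse observation for $\irounds\geq 2$; everything else is routine accounting.
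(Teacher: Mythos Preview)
The paper does not give its own proof of this theorem: it is stated with a citation to \cite[Theorem 1]{DMR25} and the text immediately moves on to the soundness analysis. Your direct bookkeeping argument is correct and is essentially the only reasonable way to establish the result; in particular, the query-reuse observation you single out for $\irounds\geq 2$ is exactly what the protocol description itself encodes when it distinguishes ``$\proportioncheck n\nombrecoupes$ queries if $\irounds=1$'' from ``$\proportioncheck n(\nombrecoupes-1)$ queries if $\irounds\geq 2$''.
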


\noindent
To obtain the soundness of the flowering protocol, stated in vertex distance in \Cref{proposition:query-soundness}, we adapt the proof of \cite[Proposition 4]{DMR25} using \Cref{proposition:commit-soundness} instead of \cite[Proposition 3]{DMR25} and the weighted vertex distances instead of the uniform vertex distance.
The proof is done in \Cref{sec:proof-query-soundness}.

\begin{proposition}[Query soundness]
  \label{proposition:query-soundness}
  Let $\epsilon>0$, $k<n$ and $f_0\in\mots{\Gamma_0}$.
  After running the Flowering protocol with \repetitions{} repetitions of the query phase by checking $\proportioncheck n$ edges, the probability over its internal randomness that the Verifier accepts is at most
  \begin{equation*}
    \label{eq:query-soundness-vertex-distance}
    \frac{\rounds(\nombrecoupes-1)}{\epsilon|\bbF|}+\left(1-\proportioncheck\cdot\big(\Delta_V(f_0,\code{\Gamma_0}{k})-\rounds\epsilon\big)\right)^\repetitions\text.
  \end{equation*}
\end{proposition}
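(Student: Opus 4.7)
The plan is to mirror the proof structure of~\cite[Proposition 4]{DMR25}, adapting it to the generalized overlapping cuts via the weighted vertex distances $\Delta_{w_\irounds}$. I introduce the auxiliary iteratively honestly folded sequence defined by $\bar{f}_0 := f_0$ and $\bar{f}_\irounds := \pliage{\bar{f}_{\irounds-1}}{\randomness_{\irounds-1}}$, together with $\bar{\delta}_\irounds := \Delta_{w_\irounds}(\bar{f}_\irounds, \code{\Gamma_\irounds}{k})$. Define the bad commit event $\rejectevent := \bigcup_{\irounds=1}^{\rounds}\{\bar{\delta}_\irounds < \bar{\delta}_{\irounds-1} - \epsilon\}$. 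Applying \Cref{proposition:commit-soundness} with the weight $w_{\irounds-1}$ at each round $\irounds$ (whose invariance under the isomorphisms is ensured by \Cref{proposition:symmetry-weight}) and union-bounding gives $\Pr[\rejectevent] \leq \rounds(\nombrecoupes-1)/(\epsilon|\bbF|)$, producing the first summand of the target bound.

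I then condition on $\overline{\rejectevent}$; telescoping yields $\bar{\delta}_\rounds \geq \Delta_V(f_0, \code{\Gamma_0}{k}) - \rounds\epsilon$, which I abbreviate by $\delta^*$. For a single repetition of the query phase, the goal is to show that the verifier rejects with probability at least $\proportioncheck\cdot\delta^*$. Since $\Gamma_\rounds$ contains a single vertex, the final code check passes only if $f_\rounds \in \code{\Gamma_\rounds}{k}$; in that case the triangle inequality forces $\Delta_{w_\rounds}(f_\rounds, \bar{f}_\rounds) \geq \bar{\delta}_\rounds \geq \delta^*$. I use the multiplicity symmetry~\cref{eq:symmetry-cuts} and the weight recursion to prove by induction on $\irounds$ that, under the verifier's random choices of $v_{0,\irepetitions}$ (uniform) and the subsequent $\icoupes_{\irounds,\irepetitions}$ (uniform among eligible indices), the marginal law of $v_{\irounds,\irepetitions}$ is precisely $w_\irounds/|V_\irounds|_{w_\irounds}$, reusing the weighted-counting computation that drives~\cref{eq:inequality-V-V0} in the commit-soundness proof. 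A descent argument inherited from~\cite{DMR25} then shows that whenever a trajectory visits a vertex $v_\irounds$ where $f_\irounds$ and the honest fold $\pliage{f_{\irounds-1}}{\randomness_{\irounds-1}}$ disagree, the uniformly random subset $\sousensemblearetes_\irepetitions$ catches a differing edge with probability at least $\proportioncheck$. Independence across the $\repetitions$ repetitions raises the per-repetition acceptance bound $1 - \proportioncheck\cdot\delta^*$ to the power $\repetitions$, which combines with the bound on $\Pr[\rejectevent]$ to complete the proof.

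The main obstacle is the descent argument linking the level-$\rounds$ mismatch $\Delta_{w_\rounds}(f_\rounds, \bar{f}_\rounds) \geq \delta^*$ to a level-by-level mismatch that the verifier's linear trajectory can actually detect. The weights $w_\irounds$ and the multiplicity-invariance~\cref{eq:symmetry-cuts} are precisely what is needed to ensure that when a mismatch at the top level is lifted backwards through the overlapping cuts, the total mass is preserved, so that averaging the detection probability over random trajectories recovers the full $\proportioncheck\cdot\delta^*$ lower bound rather than a factor-of-$\rounds$ degraded one. Verifying this mass preservation rigorously, while correctly accounting for the $\superposition{v}$ normalization at each round and reusing the argument of~\cite[Proposition 4]{DMR25}, is the only place where the proof requires nontrivial modification beyond replacing the uniform vertex distance with the weighted distances $\Delta_{w_\irounds}$.
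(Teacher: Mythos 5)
Your commit-soundness half is fine and matches the paper (union bound over the $\rounds$ applications of \Cref{proposition:commit-soundness} with the weight recursion), and your claim that the marginal law of $v_{\irounds,\irepetitions}$ is $w_\irounds/|V_\irounds|_{w_\irounds}$ is exactly the paper's \Cref{lemma:weighted-probabilities}, provable by the induction you describe. The genuine gap is the query-phase core, which you label a ``descent argument inherited from~\cite{DMR25}'' and do not carry out: this is precisely the part that must be reconstructed in the overlapping-cut setting, and your top-down framing --- condition on the honest chain $\bar{f}_\irounds$, deduce $\Delta_{w_\rounds}(f_\rounds,\bar{f}_\rounds)\geq\delta^*$ at the flower, then ``lift the mismatch backwards through the cuts with mass preservation'' --- does not directly work. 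The verifier's checks compare $f_\irounds$ with $\pliage{f_{\irounds-1}}{\randomness_{\irounds-1}}$ (fold of the \emph{committed} word), not with your iterated honest fold $\bar{f}_\irounds$, and weighted vertex distance is not preserved when a disagreement between $f_{\irounds-1}$ and $\bar{f}_{\irounds-1}$ is pushed forward through a fold: a disagreement set of relative $w$-measure $\delta$ can produce a fold-level disagreement set of relative $\tilde{w}$-measure up to $\nombrecoupes\delta$, so a naive chaining of the level-$\rounds$ mismatch back to per-level detection events loses factors of $\nombrecoupes$ (or of $\rounds$), and nothing in your sketch shows how the weights alone prevent this.

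The paper's proof avoids this by a different mechanism, which is the missing idea. For each trajectory it builds a \emph{hybrid} sequence $f'_{\irounds,1}$ (\Cref{lemma:disjoint-events}): below a cutoff round $\iroundsmax(v_0)$, the largest round at which the committed word disagrees with the fold of the previous committed word along the trajectory, the hybrid equals the fold of the committed word, and above it the committed word itself, with $f'_{\rounds,1}=f_\rounds$. This makes the per-round disagreement events $N'_{\irounds,1}$ \emph{disjoint} and contained in the union of the actual detection events $N_{\irounds,1}$, so their probabilities add. Then, using \Cref{lemma:weighted-probabilities} to identify $\Pr(N'_{\irounds,1})$ with the weighted distance $\Delta_{w_\irounds}(f'_{\irounds,1},\pliage{f'_{\irounds-1,1}}{\randomness_{\irounds-1}})$ and the triangle inequality together with commit soundness at each level, one telescopes to $\sum_{\irounds=1}^{\rounds}\Pr(N'_{\irounds,1})\geq\delta_0-\delta_\rounds-\rounds\epsilon$ with no multiplicative loss, and $\delta_\rounds=0$ when the final Reed--Solomon check passes. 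Without this hybrid construction (or an equivalent device making the per-round events disjoint and tying them to the committed words the verifier actually queries), your per-repetition lower bound $\proportioncheck\cdot\delta^*$ is asserted rather than proved, so the proposal as written does not yet establish the proposition.
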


\noindent
If the successive cut collections have a different order $\nombrecoupes_\irounds$, then the left term of the soundness becomes $\sum_{\irounds=1}^\rounds\frac{\nombrecoupes_\irounds-1}{\epsilon|\bbF|}$.

Combining \Cref{proposition:inequality-distances,proposition:query-soundness}, we obtain the soundness of the flowering protocol, stated in \Cref{theorem:soundness}, stated here in Hamming distance.

\begin{theorem}[Soundness of the flowering protocol]
  \label{theorem:soundness}
  Let $k<n$ and $f_0\in\mots{\Gamma_0}$.
  After running the Flowering protocol with \repetitions{} repetitions of the query phase by checking $\proportioncheck n$ edges, the probability over its internal randomness that the Verifier accepts is at most
  \begin{equation*}
    \label{eq:query-soundness}
    \min_{\epsilon>0}\left(\frac{\rounds(\nombrecoupes-1)}{\epsilon|\bbF|}+\left(1-\proportioncheck\cdot\big(\Delta_H(f_0,\code{\Gamma_0}{k})-\rounds\epsilon\big)\right)^\repetitions\right)\text.
  \end{equation*}
\end{theorem}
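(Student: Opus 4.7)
The plan is to derive the theorem directly from the two ingredients already established: the query soundness estimate of \Cref{proposition:query-soundness}, which is stated in terms of the relative vertex distance $\Delta_V(f_0,\code{\Gamma_0}{k})$, together with the comparison inequality of \Cref{proposition:inequality-distances} relating the vertex and Hamming distances. The theorem is essentially a monotonicity plus minimization step on top of these.

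First, I would observe that for a fixed $\epsilon>0$ the right-hand side of \Cref{proposition:query-soundness} is a nonincreasing function of the distance $\Delta_V(f_0,\code{\Gamma_0}{k})$, because the only place where this distance appears is inside $\bigl(1-\proportioncheck(\Delta_V(f_0,\code{\Gamma_0}{k})-\rounds\epsilon)\bigr)^\repetitions$, whose base decreases as the distance grows (and the bound is trivially replaced by $0$ once the base becomes nonpositive, so monotonicity in the allowed regime is all that is needed). Consequently, any lower bound on $\Delta_V(f_0,\code{\Gamma_0}{k})$ yields an upper bound on the acceptance probability.

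Next, I would apply \Cref{proposition:inequality-distances}: for every codeword $c\in\code{\Gamma_0}{k}$ we have $\Delta_V(f_0,c)\geq\mu(\Gamma_0)\,\Delta_H(f_0,c)$, and passing to the infimum over $c$ gives $\Delta_V(f_0,\code{\Gamma_0}{k})\geq\mu(\Gamma_0)\,\Delta_H(f_0,\code{\Gamma_0}{k})$. For the graphs considered in this paper, where $\Gamma_0$ has either no petals or the same number of petals at every vertex, we have $\mu(\Gamma_0)=1$, so that $\Delta_V(f_0,\code{\Gamma_0}{k})\geq\Delta_H(f_0,\code{\Gamma_0}{k})$. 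Substituting this lower bound into the expression of \Cref{proposition:query-soundness} replaces $\Delta_V$ by $\Delta_H$ and yields, for every $\epsilon>0$, the bound
\[
\frac{\rounds(\nombrecoupes-1)}{\epsilon|\bbF|}+\bigl(1-\proportioncheck\cdot(\Delta_H(f_0,\code{\Gamma_0}{k})-\rounds\epsilon)\bigr)^\repetitions.
\]

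Finally, since the original bound of \Cref{proposition:query-soundness} holds for every choice of $\epsilon>0$, so does the substituted one, and taking the pointwise minimum over $\epsilon$ preserves the inequality, producing exactly the statement of \Cref{theorem:soundness}. There is no real obstacle here; the only point that requires a brief sanity check is the monotonicity step when the base of the $\repetitions$-th power falls outside $[0,1]$, which is handled by observing that the bound is vacuous in that regime, so the substitution $\Delta_V\mapsto\Delta_H$ is safe.
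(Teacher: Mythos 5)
Your proposal is correct and follows exactly the paper's route: the paper obtains \Cref{theorem:soundness} by combining \Cref{proposition:query-soundness} (stated in vertex distance) with \Cref{proposition:inequality-distances}, using $\mu(\Gamma_0)=1$ for the (vertex-transitive) graphs considered, and then minimizing over $\epsilon$. Your added remarks on monotonicity of the bound and on the vacuous regime of the $\repetitions$-th power are just the details the paper leaves implicit.
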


\noindent
Here again, if the successive cut collections have a different order $\nombrecoupes_\irounds$, the left term of the soundness becomes $\sum_{\irounds=1}^\rounds\frac{\nombrecoupes_\irounds-1}{\epsilon|\bbF|}$.




\section{Cutting Cayley graphs}

We describe how to apply the flowering protocol using any Cayley graph~\cite{Cay78}.

\begin{definition}[Cayley graph]
  Let $G$ be a finite multiplicative group and $S=\{s_1, ...,s_n\}$ a symmetric generating set.
  We define the corresponding Cayley graph $\Cay(G,S)$ by $V=G$,
  and $E : (g,\iaretes)\mapsto g\cdot s_\iaretes$.
\end{definition}

\noindent
We also allow ourselves to write the edges as a function $E:G\times S\to G$ where $E(g,s):=gs$.

\Cref{fig:cayley-graph} illustrates the construction of a Cayley graph.
\begin{figure}[h!]
  \centering
  \scalebox{0.7}{
    \begin{tikzpicture}[every node/.style={inner sep=1.5pt}, scale=0.5]

      \node (a1) at (0*360/3+90:1.5) {$Id$};
      \node (a2) at (1*360/3+90:1.5) {$(132)$};
      \node (a3) at (2*360/3+90:1.5) {$(123)$};

      \begin{scope}[shift={(90:6)}]
        \node (b1) at (0*360/3-90:1.5) {$(12)(34)$};
        \node (b2) at (1*360/3-90:1.5) {$(143)$};
        \node (b3) at (2*360/3-90:1.5) {$(243)$};
      \end{scope}

      \begin{scope}[shift={(90+360/3:6)}]
        \node (c1) at (0*360/3+30:1.5) {$(234)$};
        \node (c2) at (1*360/3+30:1.5) {$(142)$};
        \node (c3) at (2*360/3+30:1.5) {$(13)(24)$};
      \end{scope}

      \begin{scope}[shift={(90-360/3:6)}]
        \node (d1) at (0*360/3+150:1.5) {$(134)$};
        \node (d2) at (1*360/3+150:1.5) {$(14)(23)$};
        \node (d3) at (2*360/3+150:1.5) {$(124)$};
      \end{scope}

      \draw[areteA] (a1) to (a2);
      \draw[areteA] (a2) to (a3);
      \draw[areteA] (a3) to (a1);

      \draw[areteA] (b1) to (b2);
      \draw[areteA] (b2) to (b3);
      \draw[areteA] (b3) to (b1);

      \draw[areteA] (c1) to (c2);
      \draw[areteA] (c2) to (c3);
      \draw[areteA] (c3) to (c1);

      \draw[areteA] (d1) to (d2);
      \draw[areteA] (d2) to (d3);
      \draw[areteA] (d3) to (d1);

      \draw[areteA] (a1) to (b1);
      \draw[areteA] (a2) to (c1);
      \draw[areteA] (a3) to (d1);

      \draw[areteA] (b3) to (c2);
      \draw[areteA] (c3) to (d2);
      \draw[areteA] (d3) to (b2);

    \end{tikzpicture}
  }
  \caption{Cayley graph on the alternating group $G=\calA_4$ and $S=\{(123), (132), (12)(34)\}$.}
  \label{fig:cayley-graph}
\end{figure}
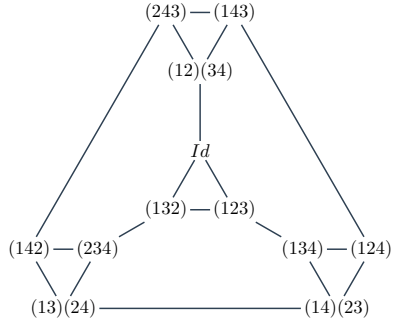
Recall that in a graph $\Gamma=(V,E)$, a \emph{path} of length $\ell$ is a sequence of vertices $v_0,...,v_\ell$ such that for all $i\in[\ell]$, $v_{i-1}$ and $v_i$ are neighbors.
The \emph{distance} between two vertices $v,v'\in V$ is the length $\ell$ of the shortest path $v_0,...,v_\ell$ such that $v_0=v$ and $v_\ell=v'$.
The \emph{diameter} of a graph, denoted $\diam(\Gamma)$, is the maximum distance between two vertices of $V$.

\begin{proposition}
  \label{proposition:cut-friendly-writing}
  Let $G$ be a finite group.
  Fix $S_1 = \{s_0, ..., s_{\gne-1}\}$ a generative subset of $G$ and consider $S = S_1 \cup S_{-1}$, where $S_{-1} = \{s_0^{-1}, ... ,s_{\gne-1}^{-1}\}$, the induced symmetric generating set.
  Define $\boucle{s}_\icoupes:=s_{\icoupes\mod\gne}$.
  Let $\Gamma:=\Cay(G,S)$.
  Then with $\rounds:=\gne\,\diam(\Gamma)$, we have
  \begin{equation}
    \label{eq:final-writing-G}
    G=\left\{\boucle{s}_0^{\alpha_0}\cdots\boucle{s}_{\rounds-1}^{\alpha_{\rounds-1}}\mid \alpha_0,...,\alpha_{\rounds-1}\in\{-1,0,1\}\right\}\text.
  \end{equation}
\end{proposition}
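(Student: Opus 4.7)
The plan is to use the diameter of the Cayley graph to express each element of $G$ as a short word in $S$, and then to embed that word inside the longer sequence $(\boucle{s}_0,\ldots,\boucle{s}_{\rounds-1})$ by assigning exponent $0$ to the redundant positions.

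First, I would fix $g\in G$ and consider a shortest path from the identity to $g$ in $\Gamma=\Cay(G,S)$. By the definition of the diameter, this path has length $\ell\leq\diam(\Gamma)$, and traversing it corresponds to a factorization
\[
g = t_1 t_2 \cdots t_\ell
\]
with each $t_j\in S$. Since $S=S_1\cup S_{-1}$, each $t_j$ admits a writing $t_j = s_{i_j}^{\beta_j}$ with $i_j\in\{0,\ldots,\gne-1\}$ and $\beta_j\in\{-1,1\}$.

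Next, I would exploit the block structure of $(\boucle{s}_0,\ldots,\boucle{s}_{\rounds-1})$. Since $\rounds=\gne\,\diam(\Gamma)$ and $\boucle{s}_\icoupes=s_{\icoupes \bmod \gne}$, this sequence splits into exactly $\diam(\Gamma)$ consecutive blocks of length $\gne$, each equal to $(s_0,s_1,\ldots,s_{\gne-1})$. For each $j\in\{1,\ldots,\ell\}$ I would place the factor $t_j$ inside the $j$-th block by setting $\alpha_{(j-1)\gne+i_j}:=\beta_j$, and I would set every other exponent to $0$. Within the $j$-th block every factor is then the identity except at position $(j-1)\gne+i_j$ where it equals $s_{i_j}^{\beta_j}=t_j$, while the trailing blocks (for $j>\ell$) contribute only identities. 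The overall product collapses to $t_1\cdots t_\ell=g$, yielding the desired inclusion of $G$ into the right-hand side of \cref{eq:final-writing-G}. The reverse inclusion is immediate since each factor $\boucle{s}_\icoupes^{\alpha_\icoupes}$ lies in $G$.

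The main subtlety — and the reason one cannot simply use a single length-$\gne$ block — is that $G$ is non-commutative in general. Activating more than one position in the same block would rigidly fix the order in which those generators get multiplied, so one must afford at most one active position per block. This forces the number of blocks to be at least the longest geodesic length, which is precisely what the factor $\diam(\Gamma)$ in the definition of $\rounds$ supplies. Once this restriction is respected the remainder is a telescoping computation with no real difficulty.
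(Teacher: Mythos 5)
Your proof is correct and follows essentially the same route as the paper: reduce $g$ to a word of length at most $\diam(\Gamma)$ in $S$ via a geodesic, then spread its letters over consecutive blocks of length $\gne$ with a single nonzero exponent per block and zeros elsewhere. The only cosmetic difference is that the paper first pads the geodesic word to length exactly $\diam(\Gamma)$ using exponent $0$ before the block expansion, whereas you leave the trailing blocks entirely zero — the content is identical.
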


\begin{proof}
  Since $S$ generates $G$, for any $g\in G$, there exists $\ell\in\bbN$ and a sequence $\path=\left(s_{i_0}^{\beta_0},...,s_{i_{\ell-1}}^{\beta_{\ell-1}}\right)\in S^\ell$ where $\beta_j\in\{-1,1\}$ such that $g=s_{i_0}^{\beta_0}\cdots s_{i_{\ell-1}}^{\beta_{\ell-1}}$.
  We can see $\path$ as a path from $1_G$ to $g$ in $\Gamma$. By definition of the diameter, there exists such a path of length $\ell\leq d:=\diam(\Gamma)$.
  Hence, any $g\in G$ can be written with a sequence of $d$ elements $w:=(s_{i_0}^{\gamma_0},...,s_{i_{d-1}}^{\gamma_{d-1}})\in S^d$ with $\gamma_j\in\{-1,0,1\}$, such that $g=s_{i_0}^{\gamma_0}\cdots s_{i_{d-1}}^{\gamma_{d-1}}$.
  We obtain the following description of $G$:
  \begin{equation}
    \label{eq:first-writing-G}
    G=\left\{s_{i_0}^{\gamma_0}\cdots s_{i_{d-1}}^{\gamma_{d-1}}\mid
      \begin{cases}
        i_0,...,i_{d-1}\in \{0,...,\gne-1\}\\
        \gamma_0,...,\gamma_{d-1}\in\{-1,0,1\}
      \end{cases}
    \right\}\text.
  \end{equation}

  \noindent
  Given $g\in G$ associated with a sequence $w=(s_{i_0}^{\gamma_0},...,s_{i_{d-1}}^{\gamma_{d-1}})$, let us construct a sequence $(\boucle{s}_0^{\alpha_0},...,\boucle{s}_{\rounds-1}^{\alpha_{\rounds-1}})$ such that $g=\boucle{s}_0^{\alpha_0}\cdots\boucle{s}_{R-1}^{\alpha_{\rounds-1}}$.
  For $s_j^{\gamma}$ in $w$, denote for $i\in\{0,...,\gne - 1\}$
  \[
    \alpha_{j,i}:=
    \begin{cases}
      \gamma&\text{if }i=j,\\
      0&\text{otherwise.}
    \end{cases}
  \]
  Then $s_{0}^{\alpha_{j,0}} ... s_{\gne - 1}^{\alpha_{j,\gne - 1}} = s_{j}^{\alpha_{j,j}} = s_j^{\gamma}$.
  Therefore with $\tilde{\alpha}_{j\gne+i}:=\alpha_{j,i}$, $g$ can be written
  $$g=\prod_{j = 0} ^{d-1} \prod_{i=0}^{\gne-1} s_{i}^{\alpha_{j,i}} = \prod_{\irounds=0}^{\rounds-1} \boucle{s}_{\irounds}^{\tilde{\alpha}_{\irounds}}.$$
  Thus, by \cref{eq:first-writing-G} we obtain \cref{eq:final-writing-G}.
  \QED
\end{proof}

\noindent
Thanks to the writing of $G$ as in \cref{eq:final-writing-G} using \Cref{proposition:cut-friendly-writing}, we can define a blossoming sequence of length $\rounds$ by taking shorter and shorter paths.
For $\icoupes\in\{-1,0,1\}$ and $\irounds\in[\rounds]$, define
\begin{equation*}
  \label{eq:def-cut}
  V_{\irounds,\icoupes}:=\left\{\boucle{s}_{\irounds-1}^\icoupes\cdot \boucle{s}_{\irounds}^{\alpha_\irounds}\cdots \boucle{s}_{\rounds-1}^{\alpha_{\rounds-1}}\mid \alpha_\irounds,...,\alpha_{\rounds-1}\in\{-1,0,1\}\right\}\text.
\end{equation*}
We have that $V_{\rounds,0}=\{1_G\}$ is a singleton and we prove in \Cref{proposition:cut-equivalence} that for $\irounds\in[\rounds]$, the cuts $\coupe{\Gamma_{\irounds-1}}{V_{\irounds,\icoupes}}$ for $\icoupes\in\{-1,0,1\}$ are isomorphic, and in \Cref{proposition:cayley-cut-are-symmetric} that they form a flowering cut collection, therefore the graphs form a blossoming graph sequence $\Gamma_0,...,\Gamma_\rounds$.
For $\irounds\in[\rounds]$, we denote $\Gamma_\irounds=(V_{\irounds,0},E_\irounds):=\coupe{\Gamma_{\irounds-1}}{V_{\irounds,0}}$, and for $\icoupes\in\{-1,0,1\}$, we denote $\Gamma_{\irounds,\icoupes}:=\coupe{\Gamma_{\irounds-1}}{V_{\irounds,\icoupes}}$.

\begin{proposition}
  \label{proposition:cut-equivalence}
  Let $\irounds\in[\rounds]$ and $\icoupes\in\{-1,0,1\}$. With the notations above, $\varphi_{\irounds,\icoupes}:g\mapsto\boucle{s}_{\irounds-1}^{\icoupes}\cdot g$ is an isomorphism from $\Gamma_{\irounds,0}$ to $\Gamma_{\irounds,\icoupes}$.
\end{proposition}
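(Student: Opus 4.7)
The plan is to verify directly the two defining properties of a RIM isomorphism: first, that $\varphi_{\irounds,\icoupes}$ is a bijection $V_{\irounds,0} \to V_{\irounds,\icoupes}$, and second, that it intertwines the edge functions, i.e.
\[
  \varphi_{\irounds,\icoupes}\bigl(E_\irounds(v,\iaretes)\bigr) = E_{\irounds,\icoupes}\bigl(\varphi_{\irounds,\icoupes}(v), \iaretes\bigr)
\]
for every $(v,\iaretes) \in V_{\irounds,0}\times[n]$, where I write $E_{\irounds,\icoupes}$ for the edge function of $\Gamma_{\irounds,\icoupes}$.

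The bijection is immediate: left-multiplication by the fixed element $\boucle{s}_{\irounds-1}^{\icoupes}\in G$ is injective on $G$, and by the very definition of $V_{\irounds,\icoupes}$ as $\boucle{s}_{\irounds-1}^{\icoupes}\cdot V_{\irounds,0}$, the map $\varphi_{\irounds,\icoupes}$ sends $V_{\irounds,0}$ bijectively onto $V_{\irounds,\icoupes}$.

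For the edge intertwining, the key intermediate step is a clean description of the edge functions appearing. By induction on $\irounds$, using the inclusion $V_{\irounds+1,0}\subseteq V_{\irounds,0}$ (obtained by setting $\alpha_\irounds=0$ in the defining expression) together with the definition of $\coupe{\cdot}{\cdot}$, for every $v\in V_{\irounds,0}$ one has
\[
  E_\irounds(v,\iaretes)=\begin{cases} v\cdot s_\iaretes & \text{if } v\cdot s_\iaretes\in V_{\irounds,0},\\ v & \text{otherwise,}\end{cases}
\]
i.e.\ $E_\irounds$ is just the Cayley edge function restricted to $V_{\irounds,0}$, with every outgoing edge turned into a petal. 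The same description holds for $E_{\irounds,\icoupes}$ on the vertices of $V_{\irounds,\icoupes}\subseteq V_{\irounds-1,0}$ by one more application of the cut definition.

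With this description the intertwining reduces to a short case distinction. Associativity gives $(\boucle{s}_{\irounds-1}^{\icoupes}\cdot v)\cdot s_\iaretes=\boucle{s}_{\irounds-1}^{\icoupes}\cdot(v\cdot s_\iaretes)$, and the identity $V_{\irounds,\icoupes}=\boucle{s}_{\irounds-1}^{\icoupes}\cdot V_{\irounds,0}$ makes the two membership tests ``$v\cdot s_\iaretes\in V_{\irounds,0}$'' and ``$\varphi_{\irounds,\icoupes}(v)\cdot s_\iaretes\in V_{\irounds,\icoupes}$'' equivalent. Thus whenever the edge is kept, both sides equal $\boucle{s}_{\irounds-1}^{\icoupes}\cdot v\cdot s_\iaretes$, and whenever it becomes a petal, both sides equal $\boucle{s}_{\irounds-1}^{\icoupes}\cdot v$. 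I do not anticipate any real obstacle: the whole argument is bookkeeping made transparent by the left-translate structure of the cuts, with the only mildly technical step being the inductive description of $E_\irounds$.
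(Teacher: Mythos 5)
Your proof is correct and follows essentially the same route as the paper's: both reduce the claim to the equivalence ``$gs_\iaretes\in V_{\irounds,0}$ iff $\boucle{s}_{\irounds-1}^{\icoupes}\cdot gs_\iaretes\in V_{\irounds,\icoupes}$'', which holds because $V_{\irounds,\icoupes}$ is the left translate $\boucle{s}_{\irounds-1}^{\icoupes}\cdot V_{\irounds,0}$. You are in fact slightly more careful than the paper, since you justify by induction (via the nesting $V_{\irounds+1,0}\subseteq V_{\irounds,0}$) the description of the successive edge functions as the Cayley edge map with cut edges turned into petals, and you check bijectivity explicitly, both of which the paper's proof uses implicitly.
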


\begin{proof}
  Let $g\in V_{\irounds,0}$ and $s\in S$.
  Then $E_{\irounds}(g,s)=gs$ if $gs\in V_{\irounds,0}$ and $g$ otherwise,
  and $E_{\irounds}(\varphi_{\irounds,\icoupes}(g),s)=\boucle{s}_{\irounds}^{\icoupes}\cdot gs$ if $\boucle{s}_{\irounds}^{\icoupes}\cdot gs\in V_{i,0}$ and $\boucle{s}_{\irounds}^{\icoupes}\cdot g$ otherwise.
  Thus we only have to prove that $gs\in V_{\irounds,0}$ iff $\boucle{s}_{\irounds}^{\icoupes}\cdot gs\in V_{\irounds,\icoupes}$.
  We have that $gs\in V_{\irounds,0}$ iff there exists $\alpha_{\irounds+1}, ... \alpha_{\rounds}$ such that

  \[
    gs=\boucle{s}_{\irounds}^{0}\prod_{k=\irounds+1}^{\rounds}\boucle{s}_{k}^{\alpha_k}\text,
  \]
  i.e.
  \[
    \boucle{s}_{\irounds}^{\icoupes}\cdot gs = \boucle{s}_{\irounds}^{\icoupes}\prod_{k=\irounds+1}^{\rounds}\boucle{s}_{k}^{\alpha_k}\text.
  \]
  The existence of such a decomposition is equivalent to $\boucle{s}_{\irounds}^{\icoupes}\cdot gs\in V_{\irounds,\icoupes}$.
  \QED
\end{proof}

\begin{proposition}
  \label{proposition:cayley-cut-are-symmetric}
  For $\irounds\in[\rounds]$, $\icoupes\in\{-1,0,1\}$ and $v\in V_{\irounds,0}$, we have $\superposition{v}=\superposition{\varphi_{\irounds,\icoupes}(v)}$.
\end{proposition}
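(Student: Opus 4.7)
Plan: The starting point is \Cref{proposition:cut-equivalence}, which gives $\varphi_{\irounds,\icoupes}(v) = \boucle{s}_{\irounds-1}^\icoupes \cdot v$, together with the direct observation that the defining formula for the cuts yields $V_{\irounds,\icoupes'} = \boucle{s}_{\irounds-1}^{\icoupes'} \cdot V_{\irounds,0}$ as subsets of $G$ for every $\icoupes' \in \{-1,0,1\}$. Using these two facts I would rewrite the multiplicity of any $u \in V_{\irounds-1,0}$ uniformly as
\[
  \superposition{u} = \bigl|\{\icoupes' \in \{-1,0,1\} \mid \boucle{s}_{\irounds-1}^{-\icoupes'} u \in V_{\irounds,0}\}\bigr|.
\]
Applying this formula to both $u = v$ and $u = \varphi_{\irounds,\icoupes}(v) = \boucle{s}_{\irounds-1}^\icoupes v$, the desired identity reduces to $|A| = |B|$ for
\[
  A := \bigl\{\icoupes' \in \{-1,0,1\} \mid \boucle{s}_{\irounds-1}^{-\icoupes'} v \in V_{\irounds,0}\bigr\}, \qquad B := \bigl\{\icoupes' \in \{-1,0,1\} \mid \boucle{s}_{\irounds-1}^{\icoupes - \icoupes'} v \in V_{\irounds,0}\bigr\}.
\]

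Second, I would construct an explicit bijection $A \to B$. For $\icoupes = 0$ the identity works. For $\icoupes = \pm 1$, the index shift $\icoupes' \mapsto \icoupes' + \icoupes$ immediately matches the two pairs whose shifted index stays in $\{-1,0,1\}$, which leaves exactly one boundary pair to be matched by hand: for instance, when $\icoupes = 1$ one must identify $\icoupes' = 1 \in A$ (with the condition $\boucle{s}_{\irounds-1}^{-1} v \in V_{\irounds,0}$) with $\icoupes' = -1 \in B$ (with the condition $\boucle{s}_{\irounds-1}^{2} v \in V_{\irounds,0}$), and symmetrically when $\icoupes = -1$.

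The heart of the proof — and the step I expect to be the main obstacle — is closing this boundary matching: showing that for every $v \in V_{\irounds,0}$ the two conditions $\boucle{s}_{\irounds-1}^{-1} v \in V_{\irounds,0}$ and $\boucle{s}_{\irounds-1}^{2} v \in V_{\irounds,0}$ are equivalent (and symmetrically for $\icoupes = -1$). The cleanest route I see is to take the explicit product expression $v = \prod_{k=\irounds}^{\rounds-1} \boucle{s}_k^{\alpha_k}$ guaranteed by $v \in V_{\irounds,0}$, and to rewrite $\boucle{s}_{\irounds-1}^{-1} v$ and $\boucle{s}_{\irounds-1}^{2} v$ as products of the $\boucle{s}_k$ for $k \geq \irounds$ using the relations in $G$ that the Cayley construction affords, together with the cyclic labelling $\boucle{s}_\icoupes = s_{\icoupes \bmod \gne}$. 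This is the step at which any structural hypothesis on the generators (in particular on the order of $\boucle{s}_{\irounds-1}$, which controls when $\boucle{s}_{\irounds-1}^2$ and $\boucle{s}_{\irounds-1}^{-1}$ admit the same decomposition) would come into play.
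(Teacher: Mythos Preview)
Your reduction is precisely the paper's: it also observes $V_{\irounds,\icoupes'}=\boucle{s}_{\irounds-1}^{\icoupes'}\cdot V_{\irounds,0}$ and argues via the index shift $\icoupes\mapsto\icoupes+\icoupes_0$. Where you stop and flag the boundary case as the real obstacle, the paper disposes of it in one line by setting ``$\icoupes_1:=\icoupes+\icoupes_0\bmod 2$'' and asserting $g\in V_{\irounds,\icoupes}\Leftrightarrow\boucle{s}_{\irounds-1}^{\icoupes_0}g\in V_{\irounds,\icoupes_1}$ --- but this map is not even a bijection on $\{-1,0,1\}$, and the asserted equivalence is never justified.

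Your instinct that an extra hypothesis on the order of $\boucle{s}_{\irounds-1}$ is needed is correct: without one, the boundary equivalence $\boucle{s}_{\irounds-1}^{-1}v\in V_{\irounds,0}\Leftrightarrow\boucle{s}_{\irounds-1}^{2}v\in V_{\irounds,0}$ can fail, and with it the proposition itself. Take $G=\Zquotient{7}$ (additively), $S_1=\{1\}$, so $\gne=1$, every $\boucle{s}_k=1$, and $\rounds=\diam(\Gamma)=3$. Then $V_{1,0}=\{5,6,0,1,2\}$, $V_{1,1}=\{6,0,1,2,3\}$, $V_{1,-1}=\{4,5,6,0,1\}$; for $v=2\in V_{1,0}$ one finds $\superposition{v}=2$ while $\varphi_{1,1}(v)=3$ has $\superposition{3}=1$. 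So the statement as written is false, and neither your route nor the paper's ``$\bmod\,2$'' step can close the gap. The shift argument \emph{does} go through under either fix you anticipate: if each generator has order~$3$ (so that $\boucle{s}_{\irounds-1}^{2}=\boucle{s}_{\irounds-1}^{-1}$ and the shift is a bijection on $\{-1,0,1\}$), or --- as the paper itself remarks immediately after this proposition --- if one replaces the exponent set $\{-1,0,1\}$ by $\{0,\dots,\mathrm{ord}(\boucle{s}_{\irounds-1})-1\}$, so that the shift becomes a bijection modulo the order.
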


\begin{proof}
  Let $g\in V_{\irounds,0}$ and $\icoupes_0\in\{-1,0,1\}$.
  For $\icoupes\in\{-1,0,1\}$, let $\icoupes_1:=\icoupes+\icoupes_0\mod2$.
  Since $g\in V_{\irounds,\icoupes}$ iff $g=\boucle{s}_{\irounds-1}^\icoupes\cdot\boucle{s}_\irounds^{\alpha_\irounds}\cdots\boucle{s}_{\rounds-1}^{\alpha_{\rounds-1}}$ for some $\alpha_\irounds,...,\alpha_{\rounds-1}\in\{-1,0,1\}$, we have that $g\in V_{\irounds,\icoupes}$ iff $\varphi_{\irounds,\icoupes_0}(g)=\boucle{s}_{\irounds-1}^{\icoupes_0}\cdot g\in V_{\irounds,\icoupes_1}$.
  Hence $\superposition{v}=\superposition{\varphi_{\irounds,\icoupes_0}(v)}$.
  \QED
\end{proof}

\noindent
A blossoming graph sequence of the Cayley graph $\Cay(\calA_4,S)$ defined in \Cref{fig:cayley-graph} is illustrated in \Cref{fig:blossoming-sequence}. 
\Cref{proposition:cut-friendly-writing} was shown using $\{-1,0,1\}$ as decomposition set, but the result also holds for any superset of $\{-1,0,1\}$, or for different sets for each coefficient $s_\irounds$.
For instance if $s_\irounds$ has order $\nombrecoupes_\irounds$ in $G$, then one could use the decomposition set $\{0,1,...,\nombrecoupes_\irounds-1\}$.


\section{Application to expander families of Cayley graphs}
\label{sec:application}

\subsection{Expansion in graphs}
In this section we recall spectral properties of expander graphs, and we establish bounds on parameters relevant to our setting.

\begin{definition}[Relative spectral gap]
    Let $\Gamma=(V,E)$ be a $n$-regular graph and $\left\{\lambda_1,...,\lambda_{|V|}\right\}$
    the eigenvalues of its adjency matrix, ordered by decreasing absolute value:
    $|\lambda_1|\geq ... \geq \left|\lambda_{|V|}\right|$.
    Then $\lambbar = \frac{\lambda_2}{\lambda_1} = \frac{\lambda_2}{n}$ is the \emph{relative spectral gap} of $\Gamma$.

\end{definition}
 A smaller $\lambbar$ corresponds to stronger expansion properties of the graph.
 For any $\lambgne \geq \lambbar$, we say that $\Gamma(V,E)$ is a $\lambgne$-expander graph.
 A family of graphs $(\Gamma_i=(V_i,E_i))_{i\in\mathbb{N}}$ is said to have \textbf{constant expansion}
 if there exist a $\lambgne$ such that for all $i\in\mathbb{N}$, $\Gamma_i$ is a $\lambgne$-expander.

\begin{definition}[Ramanujan Graph]
  \label{def:ramanujan}
  A $n$-regular graph with relative spectral expansion $\lambbar$ is a Ramanujan graph if and only if $\lambbar \leq 2\frac{ \sqrt{n-1}}{n}$.
\end{definition}
Ramanujan graphs are those whose spectral expansion approaches the Alon-Boppana lower bound established in \cite{Nil91}.
As such, they achieve the best possible spectral expansion.

We give the two properties that interest us when building codes on expander graphs in \Cref{proposition:bound-on-the-distance,proposition:bound-on-the-diameter}.

\begin{proposition}[Bound on minimum distance]
  \label{proposition:bound-on-the-distance}
  Let a $\lambgne$-expander n-regular graph $\Gamma(V,E)$ be given.
  Denote $\delta:=1-\frac{k-1}{n}$ the relative minimum distance of $\RS[n,k]$. Then:
  $$\Delta(\code{\Gamma}{k}) \geq \delta ( \delta - \lambgne).$$
\end{proposition}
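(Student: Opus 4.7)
The plan is to follow the classical Sipser--Spielman style argument for Tanner/expander codes, using the inner Reed--Solomon distance at each vertex together with the expander mixing lemma applied to the set of ``active'' vertices.

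First I would fix a nonzero codeword $f\in\code{\Gamma}{k}$ and let $T:=\{v\in V\mid f(v,\cdot)\neq 0\}$ be the set of vertices whose local view is nonzero. Two easy observations: (i) since $\RS[n,k]$ has minimum distance $\delta n$, every $v\in T$ satisfies $|\{\iaretes\in[n]\mid f(v,\iaretes)\neq 0\}|\geq \delta n$; (ii) if an edge $(v,\iaretes)$ carries a nonzero value then both of its endpoints must lie in $T$, because $f(v',\cdot)=0$ for $v'\notin T$. Writing $N$ for the number of nonzero edges and $e(T)$ for the number of (undirected) edges with both endpoints in $T$, double counting incidences from the $T$ side gives $2N\geq \delta n|T|$, whereas the support constraint yields $N\leq e(T)$.

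Next I would bound $e(T)$ using the expander mixing lemma. Since $\Gamma$ is $\lambgne$-expander with $|\lambda_2|\leq \lambgne n$, the standard mixing lemma gives
\[
  \Bigl|2e(T)-\frac{n|T|^2}{|V|}\Bigr|\leq \lambgne n|T|\text,
\]
so $2e(T)\leq \frac{n|T|^2}{|V|}+\lambgne n|T|$. Combining with $2N\geq \delta n|T|$ and $N\leq e(T)$, and dividing both sides by $n|T|$, yields
\[
  \delta\leq \frac{|T|}{|V|}+\lambgne\text,\quad\text{i.e.}\quad \frac{|T|}{|V|}\geq \delta-\lambgne\text.
\]

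Finally, the relative Hamming weight of $f$ is the number of nonzero edges divided by the total number of edges $n|V|/2$ (up to negligible corrections from petals, which only help since petals have a single endpoint). Using $N\geq \delta n|T|/2$ and the bound on $|T|$ just obtained,
\[
  \Delta_H(f,0)=\frac{N}{n|V|/2}\geq \frac{\delta|T|}{|V|}\geq \delta(\delta-\lambgne)\text.
\]
Since this holds for every nonzero $f\in\code{\Gamma}{k}$, we get $\Delta(\code{\Gamma}{k})\geq \delta(\delta-\lambgne)$. The only real subtlety is the invocation of the expander mixing lemma; once that inequality is stated, everything else is a double counting argument plus the Reed--Solomon minimum distance at each vertex.
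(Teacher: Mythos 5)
Your proof is correct and takes essentially the same route as the paper: both identify the set $T$ of vertices with nonzero local view, use the minimum distance of $\RS[n,k]$ to give every such vertex at least $\delta n$ nonzero incident edge values, and use spectral expansion to deduce $|T|\geq(\delta-\lambgne)|V|$, your explicit expander-mixing computation being exactly what the paper outsources to \cite[Lemma 2.1]{DELLM21}. Your final step, turning $|T|/|V|\geq\delta-\lambgne$ and the count of nonzero edges into the bound $\delta(\delta-\lambgne)$ on the relative weight, is precisely the conclusion the paper's proof leaves implicit.
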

\begin{proof}
  Let $\Gamma=(V,E)$ be a $\lambgne$-expander $n$-regular graph and let $c\in\code{\Gamma}{k}$ be a nonzero word.
  Since $c\neq 0$, there exists $v_0\in V$ such that at least one of its incident edges has nonzero value.
  For clarity, we will say that such a vertex is nonzero.
  Let $T\subset V$ be the smallest set such that $\forall v\in T$, $v\neq 0$.

  If $c(v,\cdot)\in \text{RS}[n,k] \neq 0$, then at least $n-k+1$ of its components are nonzero.
  Therefore, $T$'s average arity is greater or equal to $n-k+1=\delta n$.
  From~\cite[Lemma 2.1]{DELLM21}, we get that $|T| \geq (\delta - \lambgne)N$.
  \QED
\end{proof}

\begin{proposition}[Bound on the diameter]
  \label{proposition:bound-on-the-diameter}
  Let $\Gamma=(V,E)$ be a $n$-regular $\lambgne$-expander connected graph.
  Then we have:
  \[
    \diam(\Gamma) \leq 2\frac{\log\left(\frac{|V|}{2}\right)}{\log\left(\frac{3-\lambgne}{2}\right)} + 3
  \]
\end{proposition}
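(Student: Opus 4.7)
The plan is to bound the diameter via a classical ball-growth argument powered by the spectral expansion of $\Gamma$. For $v\in V$ and $r\geq 0$, write $B_r(v)$ for the closed ball of radius $r$ around $v$. First, I would establish a ball-growth lemma: for every $S\subseteq V$ with $|S|\leq |V|/2$, one has $|S\cup N(S)|\geq \tfrac{3-\lambgne}{2}|S|$. To obtain this I would apply the easy direction of the spectral Cheeger inequality for $n$-regular graphs, which yields $e(S,V\setminus S)\geq \tfrac{n(1-\lambbar)}{2}|S|\geq \tfrac{n(1-\lambgne)}{2}|S|$; dividing by the regularity $n$ gives $|N(S)\setminus S|\geq \tfrac{1-\lambgne}{2}|S|$, and summing with $|S|$ produces the claim.

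Next, I would iterate this on balls centered at an arbitrary vertex $v$. Set $\alpha := \tfrac{3-\lambgne}{2}$; since $\Gamma$ is connected (and implicitly $\lambgne<1$), one has $\alpha>1$. As long as $|B_r(v)|\leq |V|/2$, the growth lemma applies and yields $|B_r(v)|\geq \alpha^r$ by induction from $|B_0(v)|=1$. Hence the smallest $r^\ast$ with $|B_{r^\ast}(v)|>|V|/2$ satisfies $r^\ast\leq \log(|V|/2)/\log\alpha + 1$, uniformly in $v$.

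Finally, for any two vertices $u,w\in V$, the balls $B_{r^\ast}(u)$ and $B_{r^\ast}(w)$ each contain strictly more than $|V|/2$ vertices, so by pigeonhole they must share at least one vertex, producing a walk of length at most $2r^\ast$ from $u$ to $w$. This gives $\diam(\Gamma)\leq 2r^\ast\leq 2\log(|V|/2)/\log\alpha + 2$, and the stated $+3$ should absorb integer rounding slack in the pigeonhole step (for instance when $|V|$ is odd and one of the two balls hits exactly $\lceil |V|/2\rceil$).

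The main obstacle is proving the ball-growth lemma with the precise constant $\tfrac{3-\lambgne}{2}$: the naive expander mixing lemma loses constants unless the correct Cheeger-type normalization is used, and one must be careful to phrase everything in terms of the assumed expansion bound $\lambgne$ rather than the actual relative spectral gap $\lambbar\leq\lambgne$. Everything else is elementary bookkeeping of logarithms and a pigeonhole argument.
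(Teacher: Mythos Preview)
Your proposal is correct and matches the paper's approach: the paper's proof is a two-line sketch that cites the combinatorial-expansion diameter bound of \cite[Proposition 3.1.5]{K19} and invokes the Cheeger inequality to pass from the spectral hypothesis to the edge-expansion input of that bound, which is exactly the ball-growth-plus-pigeonhole argument you spell out. Your write-up in fact supplies the details the paper omits, and your computation even yields the slightly sharper additive constant $+2$; the paper's $+3$ is simply the constant inherited from the cited reference.
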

\begin{proof}
  A similar bound involving combinatorial expansion is established~\cite[Proposition 3.1.5]{K19}.
  We infer this spectral version of the bound using Cheeger inequality.
\end{proof}
\noindent
It naturally gives a bound on the round complexity of the protocol:
$$\rounds \leq 2\gne\frac{\log\left(\frac{|V|}{2}\right)}{\log\left(\frac{3-\lambgne}{2}\right)} + 3\gne\text.$$


\subsection{Case study: Ramanujan Cayley graphs family}
\label{sec:case-study}

In this part, we instanciate the protocol on codes built on a family of Cayley graphs with good expansion properties.
The code and protocol parameters are explicitely computed for such graphs.

Several families of Cayley graphs have been proven to be expander families \cite{Lub12}.
Among them, an explicit construction of Ramanujan expander graphs family was established independently by Margulis \cite{Mar88} and by Lubotzky, Phillips and Sanark \cite{LPS88}.
A concise description of this construction can be found in \cite[Section 11]{Mur20}.

\begin{proposition}[\cite{Mur20}]
  Fix $p$ prime such that $p \equiv 1 [4]$. For any $q\neq p$ prime such that $q \equiv 1 [4]$, denote:
  \[
    G_{p,q} =
    \begin{cases}
      \PSL_2(\bbF_q) & \text{if $p$ is a quadratic residue mod $q$},  \\
      \PGL_2(\bbF_q) & \text{otherwise.}\\
    \end{cases}
  \]
  \noindent
  A symmetric generating set $S_{p,q}$ of size $p+1$ can be explicitely built such that $\Cay(G_{p,q}, S_{p,q})$ is a Ramanujan graph.
\end{proposition}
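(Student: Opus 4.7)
The plan is to follow the classical Lubotzky--Phillips--Sarnak construction, which has three clearly separated ingredients. First, I would build the generating set from integer Hamilton quaternions. By Jacobi's four-square theorem, the equation $\alpha_0^2+\alpha_1^2+\alpha_2^2+\alpha_3^2=p$ has exactly $8(p+1)$ integer solutions when $p\equiv 1\pmod 4$; imposing the normalization that $\alpha_0$ is odd and positive (and $\alpha_1,\alpha_2,\alpha_3$ even) selects a set $\Sigma$ of $p+1$ primitive quaternions $\alpha = \alpha_0+\alpha_1 i + \alpha_2 j+\alpha_3 k$. This set is stable under conjugation $\alpha\mapsto\overline\alpha$, so it is symmetric in the quaternion algebra.

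Next, I would transport $\Sigma$ into the target matrix group. Fixing a square root of $-1$ in $\bbF_q$ (available since $q\equiv 1\pmod 4$) yields an embedding of the Hamilton quaternions into $M_2(\bbF_q)$; each $\alpha\in\Sigma$ has reduced norm $p$, so its image lies in $\GL_2(\bbF_q)$ and maps to a well-defined element of $\PGL_2(\bbF_q)$. The image is in $\PSL_2(\bbF_q)$ precisely when the determinant $p$ is a square in $\bbF_q^\times$, giving the case split for $G_{p,q}$. Letting $S_{p,q}$ be the image of $\Sigma$, I would then verify that $|S_{p,q}|=p+1$ (no two quaternions collide modulo scalars, which uses $q\neq p$ and $q\neq 2$), and that the relation $\overline\alpha\cdot\alpha = p$ becomes the symmetry $s^{-1}\in S_{p,q}$ after projectivization. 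Generation of $G_{p,q}$ follows from strong approximation for the norm-one elements of the quaternion order.

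The hard part, by far, is the Ramanujan bound $\bar\lambda\leq 2\sqrt{p}/(p+1)$. I would reduce the spectral problem on $\Cay(G_{p,q},S_{p,q})$ to a bound on Hecke eigenvalues: the adjacency operator coincides with the $p$-th Hecke operator $T_p$ acting on functions on a double coset space for a quaternion order $\Lambda[1/p]$, and its non-trivial eigenvalues are eigenvalues of $T_p$ acting on certain automorphic forms for the quaternion algebra. By the Jacquet--Langlands correspondence, these eigenvalues match those of weight-$2$ holomorphic cusp forms on a congruence subgroup of $\SL_2(\bbZ)$. Deligne's proof of the Ramanujan--Petersson conjecture then gives $|a_p|\leq 2\sqrt p$, which translates to the desired spectral bound after normalization by the degree $p+1$. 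This last step is the genuine obstacle: everything algebraic about the construction is elementary, but matching the eigenvalues to cusp forms and then invoking Deligne's theorem (via the Weil conjectures for varieties over finite fields) is what makes the graphs actually Ramanujan, and there is no known elementary substitute.
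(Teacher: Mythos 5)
The paper offers no proof of this proposition at all --- it is imported from \cite{Mur20} (ultimately the Lubotzky--Phillips--Sarnak and Margulis construction) --- and your outline is precisely that construction and its standard proof route: Jacobi's four-square count selecting the $p+1$ normalized quaternions, reduction modulo $q$ via a square root of $-1$, the quadratic-residue dichotomy between $\PSL_2(\bbF_q)$ and $\PGL_2(\bbF_q)$, strong approximation for generation, and the Ramanujan bound through Hecke eigenvalues, Jacquet--Langlands and Deligne, which you rightly identify as the only non-elementary ingredient. The one inaccuracy is the parenthetical claim that $|S_{p,q}|=p+1$ needs only $q\neq 2,p$: two distinct normalized quaternions can coincide in $\PGL_2(\bbF_q)$ when $q$ is small relative to $p$, and the standard hypothesis ensuring $p+1$ distinct generators is $q>2\sqrt{p}$ (as in \cite{LPS88} and \cite{Mur20}), a condition that is harmless for the paper's application since $p$ is fixed and $q\to\infty$.
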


\noindent
As a consequence, $p$ being a fixed prime, one can construct a family of $(p+1)$-regular Ramanujan Cayley graphs whose size can grow arbitrarily with $q$.
By definition of a Ramanujan graph, such a family has an upper bound $\lambgne = 1/\sqrt{p}$ on its relative spectral expansion that only depends on $p$.

Fix $p$ prime and select $q$ such that $G_{p,q} = \PGL_2(\bbF_q)$.
Denote $\Gamma_{p,q}(V,E)$ the associated Ramanujan graph. It has $|\PGL_2(\bbF_q)| = (q-1)q(q+1)$ vertices.
Consider the graph code $\code{\Gamma_{p,q}}{k}$ over the basecode $\text{RS}[p+1,k]$, $(p+1)/2<k<p+1$.
The code has length
\[
  N = (p+1)(q-1)q(q+1)/2\text.
\]

\noindent
The code dimension is
\[
  \dim \code{\Gamma_{p,q}}{k}\geq \left(k - \frac{p+1}{2}\right)(q-1)q(q+1)\text.
\]

\noindent
From \Cref{proposition:bound-on-the-distance} we get $\code{\Gamma_{p,q}}{k}$ minimum relative distance
\[
  \Delta\left(\code{\Gamma_{p,q}}{k}\right) \geq \delta ( \delta - \lambgne)\text,
\]
where $\delta = \frac{p-k+2}{p+1}$ is the minimum relative distance of $\text{RS}[p+1,k]$.
Since $\lambgne$ only depends on $p$, which does not depend on $q$, the minimum relative distance remains constant as the code length grows.
The diameter of the graph is also bounded according to \Cref{proposition:bound-on-the-diameter}, which yields a bound on the number of round:
\[
  \rounds \leq \constant(p) \log(N)\text,
\]
where $\constant(p)$ is constant with respect to $q$.


\section{Conclusion}

This paper generalized the IOPP for codes on $(2,n)$-regular Tanner graphs introduced in \cite{DMR25} that was only applied on codes with $o(1)$ minimum distance to any $(2,n)$-regular Tanner graph built as a Cayley graph.
This generalization allows to use Cayley graphs with good expansion, yielding codes with constant rate and minimum distance.
The great soundness parameter from~\cite{DMR25} compared to the FRI protocol~\cite{BCIKS23} is thus preserved, even if the complexity becomes slighly higher, while now allowing the use of codes with constant rate and constant minimum distance.


\section*{Acknowledgments}
The authors deeply thank Tanguy Medevielle and Élina Roussel, with whom this work was initiated as the continuation of \cite{DMR25} that first defines the flowering protocol.
The authors also thank Daniel Augot for his proofreadings and guidance and Pierre Loisel for his help regarding group theory and Cayley graphs.

\bibliographystyle{alpha}
\bibliography{biblio}

\appendix

\section{Proof of the query soundness}
\label{sec:proof-query-soundness}

We first prove \Cref{lemma:disjoint-events,lemma:weighted-probabilities} in order to prove \Cref{proposition:query-soundness}.

\begin{lemma}[{\cite[Lemma 1]{DMR25}}]
  \label{lemma:disjoint-events}
  Using the notations of \Cref{definition:flowering-protocol}, let $\irepetitions\in[\repetitions]$ and $\icoupes_{1,\irepetitions},...,\icoupes_{\rounds,\irepetitions}$ be fixed.
  There exists $(f'_{0,\irepetitions},...,f'_{\rounds,\irepetitions})\in\prod_{\irounds=0}^\rounds\mots{\Gamma_\irounds}$ such that $f'_{\rounds,\irepetitions}=f_\rounds$, and by denoting
  $N_{\irounds,\irepetitions}$ the event ``$\pliage{f_{\irounds-1}}{\randomness_{\irounds-1}}(v_{\irounds,\irepetitions},\cdot)\neq f_\irounds(v_{\irounds,\irepetitions},\cdot)$''
  and $N'_{\irounds,\irepetitions}$ the event ``$\pliage{f'_{\irounds-1,\irepetitions}}{\randomness_{\irounds-1}}(v_{\irounds,\irepetitions},\cdot)\neq f'_{\irounds,\irepetitions}(v_{\irounds,\irepetitions},\cdot)$'',
  we have that $N'_{1,\irepetitions},...,N'_{\rounds,\irepetitions}$ are disjoints and $\bigsqcup_{\irounds=1}^\rounds N'_{\irounds,\irepetitions}\subseteq\bigcup_{\irounds=1}^\rounds N_{\irounds,\irepetitions}$.
\end{lemma}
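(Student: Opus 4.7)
The plan is to define the sequence $(f'_{\irounds,\irepetitions})_{\irounds=0}^{\rounds}$ by a backward recursion starting from the boundary condition $f'_{\rounds,\irepetitions} := f_\rounds$, correcting $f_{\irounds-1}$ locally at one pre-image vertex at each step so that the folding equation at $v_{\irounds,\irepetitions}$ holds exactly. The payoff is that every event $N'_{\irounds,\irepetitions}$ will be empty, so both the pairwise disjointness and the inclusion $\bigsqcup_\irounds N'_{\irounds,\irepetitions}\subseteq\bigcup_\irounds N_{\irounds,\irepetitions}$ will hold vacuously.

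Concretely, assume $f'_{\irounds,\irepetitions}$ has been defined for some $\irounds\in[\rounds]$. For each coordinate $\iaretes\in[n]$, the folding at $v_{\irounds,\irepetitions}$ is the single affine equation
\[
  \sum_{\icoupes=0}^{\nombrecoupes-1} \randomness_{\irounds-1}^{\icoupes}\, f'_{\irounds-1,\irepetitions}\!\bigl(\varphi_{\irounds,\icoupes}(v_{\irounds,\irepetitions}),\iaretes\bigr) = f'_{\irounds,\irepetitions}(v_{\irounds,\irepetitions},\iaretes)
\]
in the $\nombrecoupes$ unknowns indexed by $\icoupes$. Since the coefficient of the $\icoupes=0$ term is always $1$, I would set $f'_{\irounds-1,\irepetitions}$ equal to $f_{\irounds-1}$ at every vertex except $\varphi_{\irounds,0}(v_{\irounds,\irepetitions})$, and solve the equation above for the value at that single vertex so that the fold matches $f'_{\irounds,\irepetitions}$ there. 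Iterating for $\irounds$ from $\rounds$ down to $1$ yields, by design, $\pliage{f'_{\irounds-1,\irepetitions}}{\randomness_{\irounds-1}}(v_{\irounds,\irepetitions},\cdot) = f'_{\irounds,\irepetitions}(v_{\irounds,\irepetitions},\cdot)$ for every $\irounds$, whence each $N'_{\irounds,\irepetitions}$ is empty.

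Thus the events $N'_{1,\irepetitions},\ldots,N'_{\rounds,\irepetitions}$ are pairwise disjoint (trivially), and their union, being empty, is contained in $\bigcup_\irounds N_{\irounds,\irepetitions}$ as required. The main obstacle I anticipate is a bookkeeping one: modifying the value at $\varphi_{\irounds,0}(v_{\irounds,\irepetitions})$ changes a single edge value of $\Gamma_{\irounds-1}$, which may also be incident to some other pre-image $\varphi_{\irounds,\icoupes'}(v_{\irounds,\irepetitions})$ (either because one of the pre-images is itself the other endpoint of that edge, or because the edge is a petal at $\varphi_{\irounds,0}(v_{\irounds,\irepetitions})$). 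Handling this correctly requires choosing, at each step, a pre-image whose edge to $v_{\irounds,\irepetitions}$ is not shared with any other pre-image; such a choice must be shown to always exist, either by exploiting the blossoming structure (distinct pre-images live in isomorphic disjoint cut-graphs, except on the intersection vertices) or, failing that, by distributing the necessary correction across several unknowns simultaneously so that the linear system remains consistent.
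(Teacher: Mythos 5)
The step that fails is your claim that each $N'_{\irounds,\irepetitions}$ becomes empty. These are events over the Verifier's internal randomness: with the cut indices $\icoupes_{1,\irepetitions},\dots,\icoupes_{\rounds,\irepetitions}$ fixed, the vertex $v_{\irounds,\irepetitions}$ is still a random variable, determined by the uniformly random starting vertex $v_{0,\irepetitions}$, and by \Cref{lemma:weighted-probabilities} its distribution charges every vertex of $V_\irounds$ (only at $\irounds=\rounds$, where $\Gamma_\rounds$ has a single vertex, is it deterministic). Your backward correction enforces the folding identity at one pre-image of one realization of $v_{\irounds,\irepetitions}$; the event $N'_{\irounds,\irepetitions}$ still contains every starting vertex whose path lands on an unrepaired mismatch, so it is not empty and the ``vacuous'' conclusion does not follow. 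To actually empty it you would need $\pliage{f'_{\irounds-1,\irepetitions}}{\randomness_{\irounds-1}}$ to agree with $f'_{\irounds,\irepetitions}$ on essentially all of $V_\irounds$, i.e.\ an exact unfolding with $f'_{\rounds,\irepetitions}=f_\rounds$ imposed at the top. That is not available for arbitrary $\randomness_{\irounds-1}\in\bbF$: symbols live on edges, so changing ``the value at $\varphi_{\irounds,0}(v)$'' also changes a neighbour's local view, distinct pre-images can coincide when the cuts overlap, and the folding operator need not be surjective. For instance (already with two disjoint cuts), if a boundary edge $(v,\iaretes)$ has its other endpoint equal to $\varphi_{\irounds,1}(v)$ with $\retour{\iaretes}=\iaretes$, then the folded value on the corresponding petal is $(1+\randomness_{\irounds-1})\,f'_{\irounds-1,\irepetitions}(v,\iaretes)$, identically $0$ when $\randomness_{\irounds-1}=-1$, so no preimage exists if $f'_{\irounds,\irepetitions}$ is nonzero there. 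Since the lemma must hold for every $\randomness$ and every flowering cut collection, the obstacle you flag at the end is not mere bookkeeping; it is where the approach breaks.

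Even if an exact unfolding did exist, proving the lemma with all $N'_{\irounds,\irepetitions}$ empty would satisfy only its letter and defeat its purpose, which signals that the target is wrong: the proof of \Cref{proposition:query-soundness} needs $f'_{0,\irepetitions}=f_0$ (provided by the paper's construction, though not spelled out in the statement) and the lower bound $\sum_{\irounds}\Pr(N'_{\irounds,1})\geq\delta_0-\delta_\rounds-\rounds\epsilon$, so the primed events must repackage the true deviations disjointly rather than vanish. The paper's construction is in this sense the opposite of yours: it keeps $f_0$ at the bottom, defines the honest folds $\tilde f_\irounds:=\pliage{f_{\irounds-1}}{\randomness_{\irounds-1}}$, and, along the path $v_0,v_1,\dots,v_\rounds$ determined by each starting vertex, sets $f'_{\irounds,\irepetitions}$ equal to $\tilde f_\irounds$ strictly below the last round $\iroundsmax(v_0)$ at which $f_\irounds(v_\irounds,\cdot)$ disagrees with $\tilde f_\irounds(v_\irounds,\cdot)$, and equal to $f_\irounds$ from $\iroundsmax(v_0)$ on. Along each path a primed mismatch then occurs exactly at $\irounds=\iroundsmax(v_0)$ when $\iroundsmax(v_0)\geq 1$ and never otherwise, which yields both the disjointness and the inclusion $\bigsqcup_\irounds N'_{\irounds,\irepetitions}\subseteq\bigcup_\irounds N_{\irounds,\irepetitions}$ while keeping the probabilities of the $N'_{\irounds,\irepetitions}$ meaningful for the soundness bound.
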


\begin{proof}
  Define recursively for $\irounds=0,...,\rounds$ the ``honest'' function $\tilde{f}_\irounds$ by $\tilde{f}_0:=f_0$ and for $\irounds\geq 1$, $\tilde{f}_\irounds:=\pliage{f_{\irounds-1}}{\randomness_{\irounds-1}}$.
  The $f'_{\irounds,\irepetitions}$ are also defined recursively as follows.
  First, $f'_{0,\irepetitions}:=f_0$.
  Then let $v_0\in V_0$ and denote $(v_1,...,v_\rounds)$ the sequence such that for $\irounds\in[\rounds]$, $v_\irounds=\varphi_{\irounds-1,\icoupes_{\irounds,\irepetitions}}^{-1}(v_{\irounds-1})$ is the vertex to be tested after $v_{\irounds-1}$ by the Verifier.
  Denote
  \begin{equation*}
    \label{eq:iroundsmax}
    \iroundsmax(v_0):=\max(\{\irounds\in[\rounds]\mid f_\irounds(v_\irounds,\cdot)\neq\pliage{f_{\irounds-1}}{\randomness_{\irounds-1}}\}\cup\{0\})\text.
  \end{equation*}
  Let $\iaretes\in[n]$.
  For $\irounds<\iroundsmax(v_0)$, define $f'_{\irounds,\irepetitions}(v_\irounds,\iaretes):=\tilde{f}_\irounds(v_\irounds,\iaretes)$ and for $\irounds\geq\iroundsmax(v_0)$, define $f'_{\irounds,\irepetitions}(v_\irounds,\iaretes):=f_\irounds(v_\irounds,\iaretes)$.

  We now prove that the $N'_{\irounds,\irepetitions}$ are disjoint.
  Denote $\irounds_0:=\iroundsmax(v_{0,\irepetitions})$.
  If $\irounds_0=0$ then for any $\irounds\in[\rounds]$, $N'_{\irounds,\irepetitions}$ does not hold, and if $\irounds_0\geq 1$ then for $\irounds\in[\rounds]$, $N'_{\irounds,\irepetitions}$ holds iff $\irounds=\irounds_0$.
  Therefore there is at most one $\irounds\in[\rounds]$ such that $N'_{\irounds,\irepetitions}$ holds, hence they are disjoint.

  Furthermore, if one $N'_{\irounds,\irepetitions}$ holds, i.e. if $\bigsqcup_{\irounds=1}^\rounds N'_{\irounds,\irepetitions}$ hords, then $\irounds_0>0$ and hence $\bigcup_{\irounds=1}^\rounds N_{\irounds,\irepetitions}$ holds.
  Thus $\bigsqcup_{\irounds=1}^\rounds N'_{\irounds,\irepetitions}\subseteq\bigcup_{\irounds=1}^\rounds N_{\irounds,\irepetitions}$.\QED
\end{proof}

\noindent
We also use another lemma to adapt the proof of \Cref{proposition:query-soundness} to non-disjoint cuts.

\begin{lemma}
  \label{lemma:weighted-probabilities}
  Using the notations of the flowering protocol in \Cref{definition:flowering-protocol}, for $\irounds\in\{0,...,\rounds\}$ and $v\in V_\irounds$, we have that $\Pr(v_{\irounds,1}=v)=\frac{w_\irounds(v)}{|V_\irounds|_{w_{\irounds}}}$.
\end{lemma}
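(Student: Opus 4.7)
The plan is to proceed by induction on $\irounds$, with base case $\irounds=0$. At the start of a query repetition, the Verifier draws $v_{0,1}$ uniformly from $V_0$, so $\Pr(v_{0,1}=v)=1/|V_0|$. Since $w_0$ is the uniform weight, $|V_0|_{w_0}=|V_0|$ and the claim holds for $\irounds=0$.

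For the inductive step, fix $v\in V_\irounds=V_{\irounds,0}$. The event $\{v_{\irounds,1}=v\}$ decomposes according to which $\icoupes_{\irounds,1}\in\{0,\dots,\nombrecoupes-1\}$ was drawn: conditional on $\icoupes_{\irounds,1}=\icoupes$, the definition of the protocol forces $v_{\irounds-1,1}=\varphi_{\irounds,\icoupes}(v)$. Hence
\begin{equation*}
\Pr(v_{\irounds,1}=v)=\sum_{\icoupes=0}^{\nombrecoupes-1}\Pr\!\left(v_{\irounds-1,1}=\varphi_{\irounds,\icoupes}(v)\right)\cdot\frac{1}{\superposition{\varphi_{\irounds,\icoupes}(v)}},
\end{equation*}
where the factor $1/\superposition{\varphi_{\irounds,\icoupes}(v)}$ is the probability that $\icoupes_{\irounds,1}=\icoupes$, given that it is drawn uniformly from the $\superposition{\varphi_{\irounds,\icoupes}(v)}$ indices whose cut contains $\varphi_{\irounds,\icoupes}(v)$.

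I would now simplify using the two invariance properties. By \Cref{proposition:symmetry-weight} (equation \cref{eq:symmetry-weight}), $w_{\irounds-1}(\varphi_{\irounds,\icoupes}(v))=w_{\irounds-1}(v)$, and by \cref{eq:symmetry-cuts}, $\superposition{\varphi_{\irounds,\icoupes}(v)}=\superposition{v}$. Applying the inductive hypothesis to $\varphi_{\irounds,\icoupes}(v)\in V_{\irounds-1}$ then gives
\begin{equation*}
\Pr(v_{\irounds,1}=v)=\sum_{\icoupes=0}^{\nombrecoupes-1}\frac{w_{\irounds-1}(v)}{|V_{\irounds-1}|_{w_{\irounds-1}}}\cdot\frac{1}{\superposition{v}}=\frac{\nombrecoupes\,w_{\irounds-1}(v)}{\superposition{v}\,|V_{\irounds-1}|_{w_{\irounds-1}}}=\frac{\nombrecoupes\,w_\irounds(v)}{|V_{\irounds-1}|_{w_{\irounds-1}}},
\end{equation*}
where the last equality uses the recursive definition $w_\irounds(v)=w_{\irounds-1}(v)/\superposition{v}$.

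The proof is concluded by observing that the identity $|V_{\irounds-1}|_{w_{\irounds-1}}=\nombrecoupes|V_\irounds|_{w_\irounds}$ is exactly \cref{eq:inequality-V-V0} applied to the flowering cut $V_{\irounds,0},\dots,V_{\irounds,\nombrecoupes-1}$ with weight $w_{\irounds-1}$ (and its normalization $w_\irounds=\widetilde{w_{\irounds-1}}$); substituting cancels the factor of $\nombrecoupes$ and yields $\Pr(v_{\irounds,1}=v)=w_\irounds(v)/|V_\irounds|_{w_\irounds}$. There is essentially no real obstacle here: the only subtle point is to verify that the two invariance equations \cref{eq:symmetry-cuts,eq:symmetry-weight} really do let one pull the weight and multiplicity outside the sum over $\icoupes$, which is precisely what the weighted distance was designed for, so the bookkeeping goes through cleanly.
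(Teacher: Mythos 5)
Your proof is correct and follows essentially the same route as the paper's: induction on $\irounds$, a total-probability decomposition over the cut index chosen by the Verifier, the invariances \cref{eq:symmetry-cuts,eq:symmetry-weight} together with the recursion $w_\irounds(v)=w_{\irounds-1}(v)/\superposition{v}$, and finally \cref{eq:inequality-V-V0} to renormalize. No meaningful differences to report.
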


\begin{proof}
  We prove the result by induction on $\irounds=0,...,\rounds$.
  For $\irounds=0$, by definition for any $v\in V_0$ we have $\Pr(v_{i,0}=v)=\frac{1}{|V_0|}=\frac{w_0(v)}{|V_0|_{w_0}}$.
  Suppose that for any $v'\in V_\irounds$ we have $\Pr(v_{\irounds,1}=v')=\frac{w_\irounds(v')}{|V_\irounds|_{w_{\irounds}}}$ and let $v\in V_{\irounds+1}$.
  By the total probabilities, we have
  \begin{align*}
    \Pr(v_{\irounds+1,1}=v)
    &=\sum_{\icoupes=0}^{\nombrecoupes-1}\Pr(v_{\irounds+1,1}=v\mid v_{\irounds,1}=\varphi_\icoupes(v))\Pr(v_{\irounds,1}=\varphi_\icoupes(v))\\
    &=\sum_{\icoupes=0}^{\nombrecoupes-1}\frac{1}{\#\varphi_\icoupes(v)}\cdot\frac{w_\irounds(\varphi_\icoupes(v))}{|V_\irounds|_{w_\irounds}}\\
    &=\frac{\nombrecoupes w_{\irounds+1}(v)}{|V_\irounds|_{w_\irounds}}\text.
  \end{align*}
  Furthermore, by \cref{eq:inequality-V-V0}, we have $|V_{\irounds+1}|_{w_{\irounds+1}}=|V_\irounds|_{w_\irounds}/\nombrecoupes$, and thus the result.\QED
\end{proof}

\noindent
We can now prove \Cref{proposition:query-soundness}.

\begin{proof}[of \Cref{proposition:query-soundness}]
  If $f_\rounds\notin C_\rounds$ then the Verifier rejects with probability $1$, therefore we assume in the following that
  \begin{equation}
    \label{eq:fR-notin-CR}
    \Delta_{w_\rounds}(f_\rounds,C_\rounds)=0\text.
  \end{equation}
  Let $f'_{0,\irepetitions},...,f'_{\rounds,\irepetitions}$ be given by \Cref{lemma:disjoint-events}.
  For $(\irounds,\irepetitions)\in[\rounds]\times[\repetitions]$, denote $\rejectevent_{\irounds,\irepetitions}$ the event ``$\exists\iaretes\in \sousensemblearetes_\irepetitions, \pliage{f'_{\irounds-1,\irepetitions}}{\randomness_{\irounds-1}}(v_{\irounds,\irepetitions},\iaretes)\neq f'_{\irounds,\irepetitions}(v_{\irounds,\irepetitions},\iaretes)$'' meaning that the Verifier rejects that check.
  Then the probability of the event ``the Verifier accepts'' is at most $\Pr\left(\bigcap_{\irepetitions=1}^\repetitions\bigcap_{\irounds=1}^\rounds\overline{\rejectevent_{\irounds,\irepetitions}}\right)$.
  Denote $A$ the event ``$\forall\irounds\in[\rounds], \Delta_{w_{\irounds}}(\pliage{f_{\irounds-1}}{\randomness_{\irounds-1}},C_{\irounds})\geq\Delta_{w_{\irounds-1}}(f_{\irounds-1},C_{\irounds-1})-\epsilon$'' meaning that all $\rounds$ commits ``passed'' the commit soundness.

  By the law of total probability, we have
  \begin{equation*}
    \label{eq:total-proba}
    \Pr(\text{the Verifier accepts})\leq\Pr\left(\overline{A}\right)+\Pr\left(\bigcap_{\irepetitions=1}^\repetitions\bigcap_{\irounds=1}^\rounds\overline{\rejectevent_{\irounds,\irepetitions}}\mid A\right)\text.
  \end{equation*}
  By \Cref{proposition:commit-soundness}, we have
  \begin{equation*}
    \label{eq:sum-commit-soundness}
    \Pr\left(\overline{A}\right)\leq\frac{\rounds(\nombrecoupes-1)}{\epsilon|\bbF|}\text,
  \end{equation*}
  and by independence and identical distribution of the repetitions of the query phase, we only consider the case $\irepetitions=1$ and we have
  \begin{equation*}
    \label{eq:independence-query-phase}
    \Pr\left(\bigcap_{\irepetitions=1}^\repetitions\bigcap_{\irounds=1}^\rounds\overline{\rejectevent_{\irounds,\irepetitions}}\mid A\right)=\left(1-\Pr\left(\bigcup_{\irounds=1}^\rounds\rejectevent_{\irounds,1}\mid A\right)\right)^\repetitions\text,
  \end{equation*}
  hence we only have to lower bound $\Pr\left(\bigcup_{\irounds=1}^\rounds\rejectevent_{\irounds,1}\mid A\right)$.

  Take the notations $N_{\irounds,1}$ and $N'_{\irounds,1}$ from \Cref{lemma:disjoint-events}.
  Since $|\sousensemblearetes_1|=\proportioncheck n$, we have
  \begin{equation*}
    \label{eq:check-I1}
    \Pr\left(\bigcup_{\irounds=1}^\rounds\rejectevent_{\irounds,1}\mid A\cap\bigcup_{\irounds=1}^\rounds N_{\irounds,1}\right)\geq\proportioncheck\text,
  \end{equation*}
  and thus by \Cref{lemma:disjoint-events},
  \begin{equation*}
    \label{eq:from-B-to-N}
    \Pr\left(\bigcup_{\irounds=1}^\rounds\rejectevent_{\irounds,1}\right)
    \geq\proportioncheck\Pr\left(\bigcup_{\irounds=1}^\rounds N_{\irounds,1}\right)
    \geq\proportioncheck\Pr\left(\bigsqcup_{\irounds=1}^\rounds N'_{\irounds,1}\right)
    =\proportioncheck\sum_{\irounds=1}^\rounds\Pr(N'_{\irounds,1})\text.
  \end{equation*}
  Assuming that $A$ holds and by denoting $\delta_\irounds:=\Delta_{w_{\irounds}}(f'_{\irounds,1},C_\irounds)$, we have by \Cref{lemma:weighted-probabilities} and by triangle inequality that
  \begin{align*}
    \delta_\irounds
    &\geq\Delta_{w_\irounds}(\pliage{f'_{\irounds-1,1}}{\randomness_{\irounds-1}},C_\irounds)-\Delta_{w_\irounds}(f'_{\irounds,1},\pliage{f'_{\irounds-1,1}}{\randomness_{\irounds-1}})\\
    &\geq\delta_{\irounds-1}-\epsilon-\Pr(N'_{\irounds,1})\text.
  \end{align*}
  Thus $\Pr(N'_{\irounds,1})\geq\delta_0-\delta_\rounds-\rounds\epsilon$ and by telescoping,
  \begin{equation}
    \label{eq:inequality-Nprime-deltar}
    \sum_{\irounds=1}^\rounds\Pr(N'_{\irounds,1})\geq\delta_0-\delta_\rounds-\rounds\epsilon\text.
  \end{equation}
  Since $f'_{\rounds,1}=f_\rounds$, by \cref{eq:fR-notin-CR,eq:total-proba,eq:sum-commit-soundness,eq:independence-query-phase,eq:from-B-to-N,eq:inequality-Nprime-deltar}, we get the result.\QED
\end{proof}



\end{document}